\def\LDAG{{\it LDAG}}
\def\IBS{{\it IBS}}
\def\NIR{{\it NIR}}
\def\Inf{{\it Inf}}
\def\mequal{\,\mathrm{-\hspace{-1mm}=}\ }
\def\pequal{\,\mathrm{+\hspace{-1mm}=}\ }
\def\ap{{\it ap}}
\title{Influence Blocking Maximization in Social Networks \\
	under the Competitive Linear Threshold Model \ \\\ \\ Technical Report}
\author{Xinran He$^\dag$\ \ \ Guojie Song$^\dag$\ \ \ Wei Chen$^\ddag$
	\ \ \ Qingye Jiang$^\S$ \\
$^\dag$ Ministry of Education Key Laboratory of Machine Perception,
	Peking University, China \\
	\tt{\{xinranhe,gjsong\}@pku.edu.cn} \\
$^\ddag$ Microsoft Research Asia, China, \tt{weic@microsoft.com} \\
$^\S$ Columbia University, U.S.A., \tt{qj2116@columbia.edu}
}
\begin{document}
\date{}
\maketitle

\begin{abstract}
\small\baselineskip=9pt
In many real-world situations, different and often opposite opinions, innovations, or
	products are competing with one another for their social influence in
	a networked society.
In this paper, we study competitive influence propagation in
	social networks under the
	 competitive linear threshold (CLT) model, an extension to the classic
	linear threshold model.
Under the CLT model, we focus on the problem that one entity
	tries to block the influence propagation of its competing entity as
	much as possible by strategically selecting a number of seed nodes
	that could initiate its own influence propagation.
We call this problem the influence blocking maximization (IBM) problem.
We prove that the objective function of IBM in the CLT
	model is submodular,
	and thus a greedy algorithm could achieve $1-1/e$ approximation
	ratio.
However, the greedy algorithm requires Monte-Carlo simulations of	
	competitive influence propagation, which makes the algorithm not
	efficient.
We design an efficient algorithm CLDAG, which utilizes the properties of
	the CLT model, to address this issue.
We conduct extensive simulations of CLDAG, the greedy algorithm, and other
	baseline algorithms on real-world and synthetic datasets.
Our results show that CLDAG is able to provide best accuracy in par with
	the greedy algorithm and often better than other algorithms, while
	it is two orders of magnitude faster than the greedy algorithm.

\vspace{.5mm}
\noindent
{\bf Keywords:} influence blocking maximization, competitive linear
	threshold model, social networks

\end{abstract}

\section{Introduction}
With the increasing popularity of online social and information
	networks such as Facebook, Twitter, LinkedIn, etc., many
	researchers have studied diffusion phenomenon in social networks,
	which includes the diffusion of news, ideas, innovations, adoption
	of new products, etc.
We generally refer to such diffusions as influence diffusion or propagation.
One topic in influence diffusion that has been extensively studied is
	{\em influence
	maximization}~\cite{KKT03,KKT05,KS06,NN08,CWY09,CWW10,WCSX10,CYZ10}.
Influence maximization is the problem of selecting a small set of
	{\em seed nodes} in a social network, such that its overall influence
	coverage is maximized, under certain influence diffusion models.
Popular influence diffusion models include the independent cascade (IC)
	model and the linear threshold (LT) model, which was first summarized
	by Kempe et al. in~\cite{KKT03} based on prior research in social network
	analysis and particle physics.
Both IC and LT models are stochastic models characterizing how influence
	are propagated throughout the network starting from the initial seed nodes.

However, all of the above research works only study the diffusion
	of a single idea in the social networks.
In reality, it is often the case that different and often opposite
	information, ideas and innovations are competing for their influence
	in the social networks.
Such competing influence diffusion could range from two competing
	companies engaging in two marketing campaigns trying to grab people's
	attentions, or two political candidates of the opposing parties trying
	to influence their voters, to government authorities trying to inject
	truth information to fight with rumors spreading in the public, and so on.

Motivated by the above scenarios, several recent studies have
	looked into competitive influence diffusion and its corresponding
	influence maximization
	problems~\cite{BKS07,KJY08,PBS10,TTK10,BFO10,WWW11,Chen2011}.
Most of them propose some extensions to the existing influence diffusion
	models to incorporate competitive influence diffusion, and then either
	focus on the influence maximization problem for one of the competing
	parties, or study the game theoretic aspects of competitive influence
	diffusion (see Section~\ref{sec:related} for more details on these
	related works).
In this paper, we concentrate on the problem of how to block the
	influence diffusion of an opposing party as much as possible.
For example, when there is a negative rumor spreading in the social
	network about a company,
	the company may want to react quickly by
	selecting seed nodes to inject positive
	opinions about the company to fight against the negative rumor.
Similar situations could occur when a political candidate tries to
	stop a negative rumor about him or her, or when government or
	public officials try to stop erroneous rumors about public health and
	safety, terrorist threat, etc.
We call the problem of selecting positive seed nodes in a social network
	to minimize the effect of negative influence diffusion, or to
	maximize the blocking effect on negative influence, the
	{\em influence blocking maximization (IBM)} problem.

We study the IBM problem under a competitive linear threshold (CLT) model,
	which we extend naturally
	from the classic linear threshold model and is similar
	to a model proposed independently in~\cite{BFO10}.
We prove that the objective function of IBM under the CLT model is
	monotone and submodular, which means a standard greedy algorithm can
	achieve an approximation ratio of $1-1/e-\epsilon$ to the optimal solution,
	where $\epsilon$ is any positive number.
However, the greedy algorithm requires Monte-Carlo simulations of
	competitive influence diffusion, which becomes very slow for large
	networks, if we want to keep $\epsilon$ above small.
For example, in our simulation, for a network with 6.4k nodes, the greedy algorithm
	takes more than 8 hours to finish.
This is especially problematic for the IBM problem, since blocking influence
	diffusion usually requires very swift decisions before the negative
	influence propagates too far.
To address the efficiency issue, we utilize the efficient computation
	property of the LT model for directed acyclic graphs (DAGs), and design
	an efficient heuristic CLDAG for the IBM problem under the CLT model.
Because of the complex interaction in the competitive influence diffusion
	under the CLT model, we need a carefully designed dynamic programming
	method for influence computation in our CLDAG algorithm.
To test the efficiency and effectiveness of our CLDAG algorithm, we conduct
	extensive simulations on three real-world networks as well as synthetic
	networks.
We compare the performance of CLDAG with the greedy algorithm and other
	heuristic algorithms.
Our results show that (a) comparing with the greedy algorithm, our
	CLDAG algorithm achieves matching influence blocking effect while it runs
	two orders of magnitude faster; and (b) comparing with other
	heuristics such as degree-based heuristics, our algorithm consistently
	performs well and is often better than the other heuristics with a
	significant margin.

To the best of our knowledge, our work is the first to study the IBM problem under the competitive linear threshold
	model.
The study closest to our work is the one in~\cite{WWW11}, but they study
	the IBM problem under an extension of the independent cascade model,
	and due to the issue of non-submodularity, their study only works
	for a restricted extention to the IC model that is less natural.
Moreover, their work does not address the efficiency issue, which is
	vital to influence blocking maximization.

The rest of the paper is organized as follows.
We discuss related works in Section~\ref{sec:related}.
In Section~\ref{sec:model}, we specify the competitive linear threshold model.
In Section~\ref{sec:ibm}, we define the influence blocking maximization
	problem, show that it is NP-hard, and prove its submodularity
	under the CLT model.
We describe our CLDAG algorithm in Section~\ref{sec:algo}, and then provide
	our experimental evaluation results in Section~\ref{sec:exp}.
We conclude the paper with discussions in Section~\ref{sec:conclude}.

\section{Related Work} \label{sec:related}
Independent cascade model and linear
	threshold model are two extensively studied
	influence diffusions models originally
	summarized by Kempe et al.~\cite{KKT03}, based on earlier
	works of~\cite{G78,T78,JB01}.
Kempe et al. prove that the generalized versions of these two models
	are equivalent~\cite{KKT03}.
Based on the IC and LT model, Kempe et.al
~\cite{KKT03,KKT05} propose a greedy algorithm to solve the influence
maximization problem (brought about by Richardson~\cite{DR02}) to
maximize the spreading of a single piece of ideas, innovations, etc.
	under these two models.
Many follow-up studies propose alternative heuristics and try to
	solve the influence maximization problem more
	efficiently~\cite{KS06,NN08,CWY09,CWW10,CYZ10,WCSX10}.
In terms of efficient algorithm design, our work follows the idea
	in~\cite{CWW10,CYZ10} of finding efficient local graph structures to
	speed up the computation.
In particular, our CLDAG algorithm is similar to the LDAG algorithm
	of~\cite{CYZ10}, which is also based on the DAG structure, but
	our CLDAG algorithm is novel in dealing with competitive influence
	diffusion using the dynamic programming method.

Recently, there are a number of studies on competitive influence
	diffusion~\cite{BKS07,KJY08,PBS10,TTK10,BFO10,WWW11,Chen2011}.
Bharathi et al, extend the
	IC model to model competitive influence~\cite{BKS07},
	but they only provide a polynomial approximation algorithm for trees.
Kostka et al. study competitive rumor spreading~\cite{KJY08} on
	a more restricted model than IC and LT, and focused on showing the
	hardness of computing the optimal solution for the two competing parties.
Pathak et al. study a model of multiple cascades~\cite{PBS10}, which is
	an extension of a different
	diffusion model called the voter model~\cite{CS73,HL75}, even though
	they claim it to be a generalization of the linear threshold model.
They only study model dynamics and do not address the influence maximization
	problem.
Trpevski et al.~\cite{TTK10} propose another competitive rumor spreading
	model based on the epidemic model of SIS and study the dynamics in
	several classes of graphs, and they do not address the issue of influence
	maximization either.
Borodin et.al \cite{BFO10} extend the LT model in several different
	ways to model competitive influence diffusion, one of which
	is essentially our CLT model except for a different tie-breaking rule.
However, they only study the influence maximization problem,
	not the influence blocking maximization.
In particular, they show that influence maximization in the CLT model is
	not submodular, which is an interesting contrast to our result that
	influence
	blocking maximization under the CLT model {\em is} submodular.
We provide some reason in Section~\ref{sec:conclude} on why there is such
	a subtle difference.
The work of Budak et al.~\cite{WWW11} is the only one we found that studies
	influence blocking maximization (they call it eventual influence
	limitation), but they study this problem under an extension of the IC
	model.
They show that the general extension of the IC model in which positive
	influence and negative influence has a separate set of parameters (same
	as the case in our CLT model) is not submodular, and thus to achieve
	submodularity they have
	to restrict the model such that positive propagation probability is
	$1$ or is the same as negative propagation probability,
	which limits the expressiveness of the model.
Moreover, they only study the greedy algorithm and some simple heuristics,
	and do not provide efficient and scalable solution that
	maintains good accuracy at the same time.
Finally the work of~\cite{Chen2011} studies negative opinions emerging from
	poor product or service qualities, not generated by competitors.
They  study positive influence maximization under
	an extension to the IC model,
	and thus different from our study on blocking negative
	influence under the extension of the LT model.
The efficient influence maximization algorithm in~\cite{Chen2011} also uses
	dynamic programming, which bears some resemblance to our
	work.

\section{Competitive Linear Threshold Model} \label{sec:model}
Kempe et al. proposed the linear threshold model in \cite{KKT03} as a stochastic
model to address information cascade in a network. In
this model, a social network is considered as a directed graph
$G=(V,E)$, where $V$ is the set of vertices representing individuals
	and $E$ is the set of directed
	edges representing influence relationships among individuals.
Each edge $(u,v)\in E$ has a weight $w_{uv}\ge 0$, indicating the
	importance of $u$ in influencing $v$.
For convenience, for any $(u,v)\not\in E$, $w_{uv} = 0$.
For each $v \in V$, we have $\sum_{u\in V} w_{uv} \le 1$.
Each vertex $v$  picks an independent threshold $\theta_v$ uniformly at random
	from $[0,1]$.
Each vertex is either {\em inactive} or {\em active}, and once it is active, it
	stays active forever.
The diffusion process unfolds in discrete time steps.
At step $0$ a seed set $S\subseteq V$ is activated while all other
	vertices are inactive.
At any later step $t>0$, a vertex $v$ is activated if and only if
	the total weight of its active in-neighbors exceeds its threshold
	$\theta_v$, that is $\sum_{u\in S_{t-1}}w_{uv} \ge \theta_v$, where
	$S_{t-1} \subseteq V$ is the set of active vertices by time $t-1$,
	with $S_0 = S$.

We now extend the LT model to incorporate competitive influence diffusion.
The idea is that we allow each vertex to be positively activated or
	negatively activated, each of which is determined by concurrent
	positive diffusion and negative diffusion, respectively.
In the case that a vertex is both positively activated and negatively
	activated in the same step, then negative activation dominates the result.

More precisely, we define competitive linear threshold (CLT) model as
	an extension to the LT model in the following way.
Each vertex has three states, {\em inactive}, {\em +active}, and {\em
	\mbox{-active}}, and it does not change state once it becomes
	+active or -active.
Each edge $(u,v)$ has two weights, positive weight $w_{uv}^+$ and
	negative weight $w_{uv}^-$.
We can also think of it as $(u,v)$ splitting into
	two virtual edges, one positive
	edge propagating positive influence and one negative edge propagating
	negative influence.
Each vertex $v$ picks two independent
	thresholds uniformly at random from $[0,1]$,
	one positive threshold $\theta^+_v$ and one negative threshold
	$\theta^-_v$.
At step $0$, there are two disjoint seed sets, the positive
	seed set $P_0$ and the negative seed set $N_0$.
At each step $t$, positive influence and negative influence propagate	
	independently as in the original LT model, using positive weights/thresholds
	and negative weights/thresholds, respectively.
If a vertex $v$ is activated only by positive diffusion (or resp.
	negative diffusion), then $v$ becomes +active (or resp. -active).
If in step $t$ $v$ is activated by both positive diffusion and negative
	diffusion, then negative diffusion dominates and $v$ becomes
	-active.
The negative dominance rule reflects the negativity bias phenomenon well
	studied in social psychology, and matches the common sense that rumors
	are usually hard to fight with.

The CLT model defined here
	is essentially the same as the separate threshold model
	of~\cite{BFO10}, except that we use the negative dominance as the
	tie-breaking rule, while they use the random rule --- +active
	and -active status are picked uniformly at random.
We comment that the difference in the tie-breaking rule is not essential
	for our study: the submodularity property still holds
	and our algorithm can be properly adapted for the random tie-breaking
	rule.

\section{Influence Blocking Maximization Problem} \label{sec:ibm}

In this section, we first define the influence blocking
	maximization (IBM) problem, then show that IBM under the
	CLT model is NP-hard, and finally prove that the objective
	function of IBM is monotone and submodular, which leads to a
	greedy approximation algorithm.

\subsection{Problem definition.}

Informally, the IBM problem is an optimization problem in which
	given a graph $G=(V,E)$, its positive and negative edge weights, a
	negative seed set $N_0$, and a positive integer $k$, we want to
	find a positive seed set $S$ of size at most $k$ such that the
	expected number of negatively activated nodes is minimized, or
	equivalently, the reduction in the number of negatively activated
	nodes is maximized.

More precisely, let $\vec{\theta}^+$ and $\vec{\theta}^-$
	be the vector of positive thresholds and negative thresholds, respectively,
	for all vertices in $G$.
According to the CLT model, they are drawn from the probability space
	$[0,1]^{|V|}$ uniformly at random.
When $\vec{\theta}^+$ and $\vec{\theta}^-$ are fixed, all randomness in
	the CLT model is fixed.
Let $\IBS(S, N_0 \ |\ \vec{\theta}^+,\vec{\theta}^-)$ be the set of nodes
	$v$ in $G$ such that under thresholds $\vec{\theta}^+$ and $\vec{\theta}^-$,
	$v$ is negatively activated if $N_0$ is the negative seed set and
	positive seed set is empty,
	while $v$ is not negatively activated if $N_0$ is
	the negative seed set and $S$ is the positive seed set.
Thus this set represents the set of nodes that have been blocked from
	negative influence, and IBS stands for {\em influence blocking set}.
Since we always use $N_0$ as the negative seed set, we will omit
	$N_0$ from the notation for simplicity.
When the context is clear, we may also omit $\vec{\theta}^+$ and
	$\vec{\theta}^-$ and only use $\IBS(S)$ to represent the influence
	blocking set.
We define {\em negative influence reduction (NIR)} of a positive
	seed set $S$, denoted as $\sigma_{\NIR}(S)$, to be the expected
	value of the size of $\IBS(S\ |\ \vec{\theta}^+,\vec{\theta}^-)$,
	with expectation taken over all $\vec{\theta}^+$'s and $\vec{\theta}^-$'s,
	that is,
\[
\sigma_{\NIR}(S) = E_{\vec{\theta}^+,\vec{\theta}^-}
	(|\IBS(S\ |\ \vec{\theta}^+,\vec{\theta}^-)|).
\]

Then the {\em influence blocking maximization} is the problem of finding
	a positive seed set $S$ of size at most $k$ that maximizes
	$\sigma_{\NIR}(S)$, i.e., computing
\[
	P^{*}=\arg\max_{|S|\le k}\sigma_{\NIR}(S).
\]

We first show that the exact problem of IBM is NP-hard.

\begin{theorem}\label{thm:nph}
Under the CLT model, IBM problem is NP-hard.
\end{theorem}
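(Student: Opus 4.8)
The plan is to establish NP-hardness by a reduction from a known NP-hard problem, and the natural candidate is the classical influence maximization problem under the LT model, which Kempe et al.~\cite{KKT03} proved NP-hard via a reduction from Vertex Cover (or Set Cover). The key idea is that the single-source LT influence maximization problem should be embeddable as a special case of IBM. First I would observe that if the negative diffusion is made trivially weak or confined, then the positive seed set in IBM behaves almost exactly like the seed set in ordinary LT influence maximization, and maximizing the negative influence reduction $\sigma_{\NIR}(S)$ reduces to maximizing the positive spread.

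Concretely, given an instance $G=(V,E)$ of LT influence maximization with budget $k$, I would construct a CLT instance as follows. I would keep the positive edge weights $w_{uv}^+$ equal to the original LT weights $w_{uv}$, and then arrange the negative diffusion so that, in the absence of any positive seeds, every node in a designated region becomes $-$active. The cleanest device is to attach to each vertex $v$ an auxiliary negative ``source'' structure that deterministically activates $v$ negatively at some step, while setting the negative weights among the original vertices so that positive activation of $v$ can preempt this negative activation. Because of the negative-dominance tie-breaking rule one must be careful: a node that is reached by positive influence must be turned $+$active \emph{strictly before} the negative wave arrives, so that it is counted in $\IBS(S)$. The $\IBS(S)$ then consists exactly of those vertices the positive seeds can reach and protect, and $\sigma_{\NIR}(S)$ becomes (a monotone function of) the expected positive spread $\sigma(S)$ in the original LT model.

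I would then argue the equivalence of optima: a positive seed set $S$ of size $k$ maximizes $\sigma_{\NIR}(S)$ in the constructed CLT instance if and only if it maximizes the original LT spread $\sigma(S)$, so an exact (or even a sufficiently good approximate) solver for IBM would yield an exact solver for LT influence maximization, which is NP-hard. Since the construction is clearly polynomial in the size of $G$, this establishes that IBM under the CLT model is NP-hard.

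The main obstacle will be designing the negative-diffusion gadget so that the timing interacts correctly with the negative-dominance tie-breaking rule: I must ensure that positive activation genuinely \emph{blocks} negative activation (rather than being overridden by it), and simultaneously that a node \emph{not} reachable by the positive seeds is indeed negatively activated, so that $\IBS(S)$ cleanly equals the positively protected set. Getting this timing right---delaying the negative wave by one step relative to the positive wave throughout the network, and verifying that the randomized thresholds $\vec{\theta}^+,\vec{\theta}^-$ do not break the intended domination pattern in expectation---is the delicate part. An alternative, possibly simpler route is to reduce directly from Vertex Cover in the style of~\cite{KKT03}, building a bipartite-style gadget where choosing positive seeds corresponds to covering vertices; I would pursue whichever gadget makes the tie-breaking interaction easiest to control.
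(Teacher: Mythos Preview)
Your alternative route---a direct reduction from Vertex Cover---is precisely what the paper does, so in that sense you have the right instinct. The paper's gadget attaches to each vertex $v_i$ of the Vertex Cover instance a ``spindle'' (a negative seed fanning out to $|V|$ intermediate nodes, each with negative weight $1/|V|$ into $v_i$) so that the negative wave reaches $v_i$ deterministically at step~$2$; the original edges become positive edges with weight $1/d_G(v_i)$, so that $v_i$ is $+$active at step~$1$ exactly when some neighbor of $v_i$ (or $v_i$ itself) is a positive seed. A chain of length $\ell=\Theta(d_m)$ hanging off each $v_i$ amplifies the value of protecting it, and a separate lemma rules out placing positive seeds on intermediate or chain nodes. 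Thus $\sigma_{\NIR}(S)\ge |V|(\ell+1)$ iff $S$ is a vertex cover.

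Your primary route---embed LT influence maximization into IBM by making $\sigma_{\NIR}(S)$ coincide with the LT spread $\sigma(S)$---is plausible in principle but, as written, has a real gap. The phrase ``delaying the negative wave by one step relative to the positive wave'' cannot work: in the LT model a node may be activated only after $\Theta(|V|)$ rounds, and the step at which the positive wave reaches a given node depends on the unknown seed set $S$, so a uniform one-step delay does not let the positive diffusion finish first. What you would actually need is a delay of at least $|V|$ steps at \emph{every} original node (e.g., a length-$|V|$ negative chain into each node), so that the positive LT process has fully stabilized before any original node can be turned $-$active. Even after fixing this, you still owe an argument that an optimal positive seed set never sits on gadget (chain) nodes---placing a positive seed midway on a chain blocks that chain and rescues the downstream nodes, so you must show this is always dominated by seeding an original vertex. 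The paper faces the analogous issue and handles it with an explicit lemma; without such an argument your reduction does not go through.
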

\begin{proof}
By a reduction from the vertex cover problem. The full NP-hardness proof in presented in Appendix.
\hfill$\Box$
\end{proof}

\subsection{Submodularity of $\sigma_{\NIR}(S)$ and the
	greedy approximation algorithm.}
To overcome the NP-hardness result of Theorem~\ref{thm:nph}, we look
	for approximation algorithms.
The submodularity of set function $\sigma_{\NIR}(S)$ provides a good way
	to obtain an apporiximation algorithm for the IBM problem.
We say that a set function $f(S)$ with domain $2^V$ is {\em submodular} if
	for all $S\subseteq T \subseteq V$, and $x\not\in T$,
	we have $f(S\cup \{x\})-f(S) \ge f(T\cup \{x\})-f(T)$.
Intuitively, submodularity of $f$ means $f$ has the diminishing marginal
	return property.
Moreover, we say that $f$ is {\em monotone} if for all
	$S\subseteq T \subseteq V$, $f(S)\le f(T)$.

We now show that $\sigma_{\NIR}(S)$ is monotone and submodular.
We follow the general methodology as in~\cite{KKT03} for the proof, but
	our proof is more involved because of the complexity of our CLT model
	and the IBM problem.
We first construct an equivalent random process, and then use this
	random process to prove the result.

From the original graph $G=(V,E)$ with positive and negative weights,
	we construct a {\em random live-path graph} $G_X$ as follows.
For each $v\in V$, we randomly pick one positive
	in-edge $(u,v)$ with probability $w_{uv}^+$, and with
	probability $1-\sum_{u\in V}w_{uv}^{+}$ no positive in-edge is selected;
	we also randomly pick one negative
	in-edge $(u,v)$ with probability $w_{uv}^-$, and with
	probability $1-\sum_{u\in V}w_{uv}^{-}$ no negative in-edge is selected.
Let $G^+$ be the subgraph of $G_X$ consisting of only positive edges,
	and let $G^-$ be the subgraph of $G_X$ consisting of only negative
	edges.
Given a positive seed set $P_0$ and a negative seed set $N_0$,
	define $d_{G^+}(P_0,v)$ to be the shortest graph distance from
	any node in $P_0$ to $v$ only through the positive edges, and
	$d_{G^-}(N_0,v)$ to be the shortest graph distance from
	any node in $N_0$ to $v$ only through the negative edges.
The above distance could be $\infty$ if no such path exists.
Then in the random live-path graph, we say a node $v$ is {\em +active}
	if $d_{G^+}(P_0,v) < \infty$ and $d_{G^+}(P_0,v) < d_{G^-}(N_0,v)$, and
	$v$ is {\em -active} if $d_{G^-}(N_0,v) < \infty$ and $d_{G^-}(N_0,v) \le
	d_{G^+}(P_0,v)$.
The following lemma shows that the positive and negative activation sets
	generated by the above random process is equivalent to the corresponding
	one generated by the CLT model.

\begin{lemma} \label{lem:equirand}
For a given positive seed set $P_{0}$ and negative seed set
$N_{0}$, the distribution over \emph{+active} sets and
\emph{-active} sets is identical in the following two definitions.
\vspace{-2mm}
\begin{enumerate}
\setlength{\itemsep}{-1ex}
\item distribution obtained by running CLT process,
\item
distribution obtained from reachability defined above in the
	live-path graph.
\end{enumerate}
\end{lemma}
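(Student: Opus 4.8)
The plan is to \emph{reduce the competitive claim to the single-process live-edge equivalence for the ordinary LT model (as established in~\cite{KKT03}), applied twice and then recombined}. The crucial observation is that, under the CLT dynamics, the positive diffusion started from $P_0$ (using the weights $w^+$ and thresholds $\vec{\theta}^+$) evolves exactly as a standalone LT process, completely ignoring the negative diffusion, and symmetrically for the negative diffusion started from $N_0$; moreover the two processes use independent randomness. Hence, if I can show that for a \emph{single} LT process the live-edge construction reproduces not merely the final active set but the entire profile of \emph{activation times}, then I can invoke this fact once for $G^+$ and once for $G^-$, combine the two conclusions using independence, and finally apply the deterministic labeling rule that compares the two arrival times. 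This modular structure avoids re-running a coupling argument for the full two-color process.

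First I would establish the timing-aware version of the equivalence for one LT process. Using the principle of deferred decisions, reveal each threshold $\theta_v$ only when it is needed: as the in-neighbors $u_1,u_2,\dots$ of an inactive node $v$ become active in some order, $v$ fires at the first $u_i$ for which the accumulated weight $\sum_{j\le i} w_{u_j v}$ first reaches $\theta_v$. Since $\theta_v$ is uniform on $[0,1]$, the in-neighbor that triggers $v$ equals any fixed $u$ with probability exactly $w_{uv}$, and $v$ never fires with probability $1-\sum_u w_{uv}$, \emph{independently of the order} in which the in-neighbors activated and independently across different nodes $v$. This is precisely the distribution by which the live-edge graph selects the unique live in-edge of $v$. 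Identifying the triggering in-edge with the live in-edge, an easy induction on the time step shows that the activation time of $v$ equals $1$ plus the activation time of its triggering in-neighbor, i.e. equals the length of the unique backward live path from $v$ to the seed set, with distance $\infty$ when no such path reaches the seeds. Thus the joint law of the activation times matches the law of the shortest-path distances in the live-edge subgraph.

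Next I would recombine. Applying the previous step to the positive process shows that the profile of positive activation times of the vertices has the same law as $(d_{G^+}(P_0,v))_{v\in V}$, and applying it to the negative process gives the analogous statement for $(d_{G^-}(N_0,v))_{v\in V}$. Because the positive randomness $(\vec{\theta}^+,w^+)$ is independent of the negative randomness $(\vec{\theta}^-,w^-)$, and because $G_X$ draws its positive and negative live in-edges independently, these two profiles are independent in both models; hence their \emph{joint} laws coincide. The final state of every vertex is a fixed deterministic function of this joint profile: $v$ is $+$active iff its positive time is finite and strictly smaller than its negative time, and $-$active iff its negative time is finite and at most its positive time. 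This function is literally the same in the CLT model (where the negative-dominance tie-break resolves simultaneous arrivals) and in the live-path definition (where the $-$active condition uses $\le$). Equal joint input laws through the same deterministic map yield equal output laws, which is the claimed identity of the distributions over $+$active and $-$active sets.

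I expect the main obstacle to be the timing-aware strengthening in the second paragraph: the standard live-edge equivalence is usually stated only for the final active set, whereas the competitive tie-break forces me to track and compare arrival times, so I must verify that the coupling of ``triggering in-edge'' with ``live in-edge'' is order-independent and therefore preserves the step-by-step activation times. The one delicate case is the tie $d_{G^+}(P_0,v)=d_{G^-}(N_0,v)$, corresponding in the CLT process to a vertex reached by both colors in the same step; I would check that the negative-dominance rule applied step-by-step agrees with the global comparison $d_{G^-}(N_0,v)\le d_{G^+}(P_0,v)$ used to define $-$active in the live-path graph. Once these two points are settled, the independence argument and the deterministic-map step are routine.
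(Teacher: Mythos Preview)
Your approach is correct and takes a genuinely different route from the paper. The paper proves the equivalence by a direct step-by-step coupling: it tracks the growing $+$active and $-$active sets $A_t^+,A_t^-$ in the CLT process and analogous sets $B_t^+,B_t^-$ in a layered exploration of the live-path graph, computes the conditional probability that an inactive node joins each set at the next step, and verifies that these conditional probabilities coincide, then concludes by induction from the common base case $A_0^\pm=B_0^\pm$. Your argument instead factors the problem: you exploit the clause ``positive influence and negative influence propagate independently as in the original LT model'' to observe that the positive and negative cascades are two \emph{standalone} LT processes on independent randomness, invoke a timing-aware version of the Kempe--Kleinberg--Tardos live-edge equivalence once for each, and then push both activation-time profiles through the identical deterministic tie-breaking map. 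What your decomposition buys is modularity and conceptual clarity (the competitive aspect is isolated in a single deterministic comparison at the end), at the price of needing the mild strengthening that the single-LT coupling preserves \emph{activation times}, not just the final active set; your deferred-decisions sketch handles this correctly, since the identity of the triggering in-neighbor has law $(w_{uv})_u$ regardless of arrival order. The paper's direct coupling avoids stating that auxiliary lemma but is less transparent about why the two colors can be decoupled. Both arguments yield the same conclusion; yours aligns more visibly with the distance-based reachability definition used in the lemma statement and in Lemmas~\ref{lem:OnePath}--\ref{lem:GXsub}.
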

\begin{proof}
The activation process under the CLT model consists of several
iterations. In each iteration, some nodes change from
\emph{inactive} to \emph{+active} or \emph{-active}. Thus we first
define $A_{t}^{+}$ to be the set of \emph{+active} nodes at the
end of iteration $t$ and $A_{t}^{-}$ as the set of \emph{-active}
nodes at the end of iteration $t$, for $t=0,1,2...$. Here we consider
a node $v$ which has not been activated by the end of iteration
$t$, namely $v\not\in A_{t}^{+}\cup A_{t}^{-}$. Thus
the probability $v$ becomes \emph{+active} in iteration $t+1$
equals to the chance that the positive influence weights in
$A_{t}^{+}\backslash A_{t-1}^{+}$ push it over the positive
threshold while the negative influence weights is still less than
the negative threshold. The above probability under the
condition that neither node $v$'s negative nor positive threshold
is exceeded already by step $t$ is:
\begin{displaymath}
\frac{(\sum_{u\in A_{t}^{+}\backslash
A_{t-1}^{+}}w^{+}_{uv})(1-\sum_{u\in A_{t}^{-}\backslash
	A_{t-1}^{-}}w^{-}_{uv})}{(1-\sum_{u\in
	A_{t-1}^{+}}w^{+}_{uv})(1-\sum_{u\in A_{t-1}^{-}}w^{-}_{uv})}.
\end{displaymath}
Similarly we can get the probability that node $v$ becomes
\mbox{\emph{-active}} in iteration $t+1$ given than node $v$ is
\emph{inactive} from iteration $0$ to $t$. The probability is:
\begin{displaymath}
\frac{\sum_{u\in A_{t}^{-}\backslash
A_{t-1}^{-}}w^{-}_{uv}}{(1-\sum_{u\in
A_{t-1}^{+}}w^{+}_{uv})(1-\sum_{u\in A_{t-1}^{-}}w^{-}_{uv})}.
\end{displaymath}

On the other hand, we consider the above discussed probability
when using the random live-path graph.
We start from seed set $P_{0}$ and $N_{0}$ and called
them ${B_{0}^{+}}$ and ${B_{0}^{-}}$, respectively.
For each $t=1,2,\ldots$, we define ${B_{t}^{-}}$ to be
	the set containing any $v \not\in {B_{t-1}^{+}} \cup
	{B_{t-1}^{-}}$ such that $v$ has one in-edge from
	some node in ${B_{t-1}^{-}}$; we define
	${B_{t}^{+}}$ to be the set containing any
	$v \not\in {B_{t-1}^{+}} \cup
	{B_{t-1}^{-}}$ such that $v$ has one in-edge from
	some node in ${B_{t-1}^{+}}$ but no in-edge from any node
	in ${B_{t-1}^{-}}$.

By the definition of the random live-path graph, the probability that
	a node $v$ is in $B_{t+1}^+\setminus B_t^+$ conditioned on that $v$ is not
	in ${B_{t}^{+}} \cup {B_{t}^{-}}$ is
\begin{displaymath}
\frac{(\sum_{u\in {B_{t}^{+}}\backslash
{B_{t-1}^{+}}}w^{+}_{uv})(1-\sum_{u\in {B_{t}^{-}}\backslash
{B_{t-1}^{-}}}w^{-}_{uv})}{(1-\sum_{u\in
{B_{t-1}^{+}}}w^{+}_{uv})(1-\sum_{u\in {B_{t-1}^{-}}}w^{-}_{uv})}.
\end{displaymath}
Similarly, the  probability that
	a node $v$ is in $B_{t+1}^-\setminus B_t^-$ conditioned on that $v$ is not
	in ${B_{t}^{+}} \cup {B_{t}^{-}}$ is
\begin{displaymath}
\frac{\sum_{u\in {B_{t}^{-}}\backslash
{B_{t-1}^{-}}}w^{-}_{uv}}{(1-\sum_{u\in
{B_{t-1}^{+}}}w^{+}_{uv})(1-\sum_{u\in {B_{t-1}^{-}}}w^{-}_{uv})}.
\end{displaymath}

The above conditional probabilities are the same as derived from
	the CLT model.
Since $A_{0}^{+}={B_{0}^{+}}$ and $A_{0}^{-}={B_{0}^{-}}$, by
induction over the iterations, we reach at the conclusion that the random
live-path graph model produces the same distribution over
\emph{+active} and \emph{-active} sets as the CLT model.
\hfill $\Box$
\end{proof}

With the equivalence shown in Lemma~\ref{lem:equirand}, we now focus
	on showing the monotonicity and submodularity of negative influence
	reduction in the random live-path graph model.
With a bit of abuse in notation, given a live-path graph $G_X$ and
	a negative seed set $N_0$, we also use $\IBS(S)$ to denote the
	set of nodes in $V$ which would be -active if the positive seed set is
	empty but is not -active if the positive seed set is $S$.
Then the negative influence reduction $\sigma_{\NIR}(S) = E_{G_X}(|\IBS(S)|)$.

Given a set $S$ and a node $u\not\in S$, we say that
	there is a unique path from $S$ to $u$ if there exists some path from
	a node in $S$ to $u$, and
	for any two paths from any two nodes in $S$ to $u$, one path
	must be a sub-path of the other.
In addition, whenever we refer to the unique path
	from $S$ to $u$, we mean the unique shortest path from any node in $S$
	to $u$.
The following lemma shows a simple yet important property of the
live-path graph that leads to the submodularity proof.

\begin{lemma} \label{lem:OnePath}
In a live-path graph $G_X$, for any node $v$, there is a unique
	positive path from some node in the positive
	seed set $S$ to $v$, if $d_{G^+}(S,v)<\infty$,
	and there is a unique negative path from some node
	in the negative seed set $N_0$ to $v$, if $d_{G^-}(N_0,v)<\infty$.
\end{lemma}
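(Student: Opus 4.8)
The plan is to exploit the defining feature of the live-path graph: by construction every vertex selects \emph{at most one} positive in-edge and \emph{at most one} negative in-edge, so in the positive subgraph $G^+$ (and symmetrically in $G^-$) every vertex has in-degree at most one. I will prove the claim for the positive case; the negative case is identical after replacing $G^+$ and $S$ by $G^-$ and $N_0$.

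First I would formalize tracing backward. Since each vertex $v$ has at most one positive in-neighbor, there is a well-defined partial predecessor map $p(\cdot)$ on $V$, and the backward walk $v, p(v), p^{(2)}(v), \ldots$ is uniquely determined by $v$. The key observation is that \emph{any} positive path ending at $v$ must read, from $v$ backward, as an initial segment of this walk: the last edge of such a path is forced to be the unique positive in-edge of $v$, the edge before it is forced to be the unique positive in-edge of $p(v)$, and so on by induction on the length of the path. Consequently, every positive path from a node of $S$ to $v$ is completely determined by the position at which that node first occurs along the backward walk.

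Next I would use the hypothesis $d_{G^+}(S,v) < \infty$ to guarantee existence: since some positive path from $S$ to $v$ exists, the backward walk from $v$ must meet $S$, say first at $p^{(i)}(v) \in S$. Given two seeds $s_1 = p^{(i)}(v)$ and $s_2 = p^{(j)}(v)$ in $S$ that both reach $v$, with $i \le j$, the path from $s_1$ is exactly the terminal segment $p^{(i)}(v) \to \cdots \to v$ of the path from $s_2$, hence a sub-path of it. This is precisely the nesting condition in the definition of a unique path, and the shortest such path corresponds to the smallest index $i$ at which the backward walk hits $S$, which is well-defined; this yields the claimed unique shortest positive path from $S$ to $v$.

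The main obstacle I anticipate is the possibility that the backward walk enters a positive cycle before its index reaches a seed, so that $p^{(\cdot)}(v)$ is only eventually periodic rather than simple. I would handle this by restricting attention to the shortest (simple) path, as the definition explicitly does: a vertex lying on such a cycle is still visited by the backward walk at a smallest index, and the corresponding simple path is the unique shortest one, so the nesting argument goes through verbatim. The only role of the finiteness of $d_{G^+}(S,v)$ is to rule out the degenerate case in which the backward walk cycles or terminates without ever meeting $S$.
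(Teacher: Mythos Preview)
Your proposal is correct and follows exactly the same idea as the paper: the whole lemma rests on the single observation that in the live-path graph every node has at most one positive (resp.\ negative) in-edge, so backward tracing is deterministic. The paper's own proof is literally one sentence stating this observation, whereas you have carefully unpacked the consequences (the predecessor map, the nesting of paths, and the potential cycle issue); your extra care is fine but not strictly necessary here.
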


\begin{proof}
This is obvious because each node has at most one positive
	in-edge and one negative in-edge.
\hfill $\Box$
\end{proof}

Then we use next two lemmas to give the sufficient and necessary conditions for
$v\in \IBS(S)$ and $v\in \IBS(T\cup\{u\}) \verb"\" \IBS(T)$ in a live-path graph $G_X$.
\begin{lemma} \label{lem:SNCforIBS}
The sufficient and necessary condition for $v\in \IBS(S)$ is:
\begin{enumerate}
\setlength{\itemsep}{-1ex}
\item There exist a unique negative path in $G^{-}$ from node set $N_{0}$ to $v$,
namely $d_{G^{-}}(N_{0},v)<\infty$, and
\item there exists at least one node $u$ in the unique negative
path, such that $d_{G^{+}}(S,u) <
d_{G^{-}}(N_{0},u)$.
\end{enumerate}
\end{lemma}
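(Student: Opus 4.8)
The plan is to prove the two implications separately, working entirely inside a fixed random live-path graph $G_X$ and exploiting the at-most-one-in-edge structure guaranteed by Lemma~\ref{lem:OnePath}. I would first unpack membership in $\IBS(S)$ into two parts: (a) $v$ is -active when the positive seed set is empty, and (b) $v$ is not -active when the positive seed set is $S$. Part (a) is immediate and accounts for condition~1: with an empty positive seed there is no positive propagation at all, so the negative influence spreads unobstructed and activates precisely the nodes $v$ with $d_{G^{-}}(N_0,v)<\infty$. Thus $v$ would be -active under the empty positive seed if and only if condition~1 holds, and the whole lemma reduces to showing that, under condition~1, part (b) is equivalent to condition~2.

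To analyze part (b), I would follow the negative influence along the unique negative path $P:\,p_0\to p_1\to\cdots\to p_m=v$ from $N_0$ to $v$ supplied by Lemma~\ref{lem:OnePath}, where $p_i$ sits at negative distance $d_{G^{-}}(N_0,p_i)=i$. The key observation to establish is that the negative front reaches $p_i$ exactly at step $i$ unless it has already been intercepted, namely when some node $u=p_j$ with $j\le i$ is positively activated no later than the negative front would arrive at it. Because each node owns at most one positive and one negative in-edge, activation of either sign, if it happens at all, happens exactly at the corresponding graph distance; this lets me translate ``$p_j$ is positively captured before the negative front arrives'' into the distance inequality $d_{G^{+}}(S,p_j)<d_{G^{-}}(N_0,p_j)$, with the negative-dominance tie-breaking rule ensuring that an exact tie does not block the negative front.

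For necessity, suppose $v\in\IBS(S)$. Condition~1 follows from part (a). Since $v$ is not -active under seed $S$, the negative front must have been stopped somewhere on $P$, yielding an interception node $u$ on $P$ that is itself +active before the negative front reaches it; as $u$'s positive path is then unobstructed, this gives $d_{G^{+}}(S,u)<d_{G^{-}}(N_0,u)$, i.e.\ condition~2. For sufficiency, assume conditions~1 and~2 and let $u=p_i$ be the \emph{first} node along $P$ with $d_{G^{+}}(S,u)<d_{G^{-}}(N_0,u)$. I would argue that $u$ becomes +active strictly before the negative front reaches it, so $u$ never turns -active and hence never relays negative influence to $p_{i+1}$; by the uniqueness of the negative path there is no alternative negative route to $v$, so $v$ is not -active under seed $S$, while condition~1 guarantees it was -active under the empty seed. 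Therefore $v\in\IBS(S)$.

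The main obstacle is the sufficiency direction, specifically passing from the \emph{graph-distance} inequality $d_{G^{+}}(S,u)<d_{G^{-}}(N_0,u)$ to the \emph{dynamical} statement that $u$ is actually +active before the negative front arrives. The danger is that the unique positive path feeding $u$ might itself be intercepted by negative influence, so that $u$'s positive graph-distance overstates how fast positive influence truly arrives. I expect to control this by choosing $u$ to be the minimal interception node on $P$ and by proving, via induction on distance, the auxiliary fact that a node becomes active of a given sign if and only if its unique same-sign path from the corresponding seed set is never overtaken by the opposite influence at an intermediate vertex, ties resolved by negative dominance. Lemma~\ref{lem:OnePath} is what makes this induction tractable, since it removes any ambiguity about which path to track on either side; carefully reconciling the two races at shared vertices, and confirming that activation times coincide with graph distances along the relevant paths, is the technical heart of the argument.
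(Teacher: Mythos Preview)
Your argument is correct in spirit but takes a markedly different route from the paper.  The paper's own proof is a single sentence: it says the lemma is an ``obvious derivation'' from the definition of $\IBS(S)$ together with Lemma~\ref{lem:OnePath}.  The point is that after Lemma~\ref{lem:equirand} the paper works entirely in the \emph{static} live-path graph model, where a node's $\pm$active status is declared purely by comparing the two graph distances $d_{G^+}(S,\cdot)$ and $d_{G^-}(N_0,\cdot)$.  In that model there is no step-by-step race and hence no possibility of one cascade ``intercepting'' the other along a path: $d_{G^+}(S,u)$ is computed in $G^+$ alone, oblivious to what the negative process is doing, and $v\in\IBS(S)$ unpacks directly to $d_{G^-}(N_0,v)<\infty$ together with $d_{G^+}(S,v)<d_{G^-}(N_0,v)$.

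You, by contrast, reason about the CLT-style dynamics (fronts, first interception node, induction on time), which is exactly why you hit the obstacle you describe: in the dynamical picture the positive path feeding $u$ can indeed be cut off by negative influence before it delivers.  That obstacle is real if one insists on arguing at the level of the CLT process, and your proposed inductive auxiliary fact is the natural way to resolve it; but the paper's framework has abstracted all of this away via Lemma~\ref{lem:equirand}, so from the paper's point of view the technical heart you identify simply does not arise.  What your approach buys is a self-contained dynamical argument that does not lean on the equivalence lemma; what the paper's approach buys is brevity, at the cost of pushing the real work into Lemma~\ref{lem:equirand}.
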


\begin{proof}
Lemma~\ref{lem:SNCforIBS} is an obvious derivation from the definition of $\IBS(S)$ and Lemma~\ref{lem:OnePath}.
\hfill $\Box$
\end{proof}

\begin{lemma} \label{lem:SNCforIBSE}
The sufficient and necessary
condition for $u\in \IBS(T\cup\{v\}) \verb"\" \IBS(T)$ is:
\begin{enumerate}
\setlength{\itemsep}{-1ex}
\item There exists a unique negative path from $N_{0}$ to $u$,
\item there exists at least one node $w$ on the unique negative path
	from $N_0$ to $u$, such that $d_{G^{+}}(T\cup\{v\},w)<d_{G^{-}}(N_{0},w)$, and
\item for all node $t$ on the unique negative path from $N_0$ to $u$,
	there holds that $d_{G^{+}}(T,t)\geq d_{G^{-}}(N_{0},t)$.
\end{enumerate}
\end{lemma}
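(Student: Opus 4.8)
The plan is to derive this lemma purely as a consequence of Lemma~\ref{lem:SNCforIBS}, applied once to the positive seed set $T\cup\{v\}$ and once to $T$, by rewriting the set difference $u\in\IBS(T\cup\{v\})\setminus\IBS(T)$ as the conjunction ``$u\in\IBS(T\cup\{v\})$ and $u\notin\IBS(T)$.'' The first conjunct, via Lemma~\ref{lem:SNCforIBS}, immediately supplies conditions~1 and~2 of the present statement (condition~2 with the seed set $T\cup\{v\}$). The second conjunct is the negation of the two-part characterization of $\IBS(T)$, and unpacking that negation correctly is where the argument must be done with care; it will yield condition~3.

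The key structural observation I would record first is that the unique negative path referenced in all three conditions is one and the same object regardless of the positive seed set. By Lemma~\ref{lem:OnePath}, the negative path from $N_0$ to $u$ is determined solely by $G^-$ and $N_0$ and does not depend on whether the positive seeds are $T$ or $T\cup\{v\}$. Consequently, when I apply Lemma~\ref{lem:SNCforIBS} to the two different positive seed sets, both instances of condition~1 (existence of a unique negative path to $u$, i.e.\ $d_{G^-}(N_0,u)<\infty$) coincide, and the ``at least one node'' / ``all nodes'' quantifiers in conditions~2 and~3 range over exactly the same set of vertices on that fixed path. This lets me freely combine the two applications without worrying that the paths differ.

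For the forward direction, assume $u\in\IBS(T\cup\{v\})\setminus\IBS(T)$. From $u\in\IBS(T\cup\{v\})$, Lemma~\ref{lem:SNCforIBS} gives condition~1 and, with seed set $T\cup\{v\}$, the existence of a node $w$ on the path with $d_{G^+}(T\cup\{v\},w)<d_{G^-}(N_0,w)$, which is condition~2. Now $u\notin\IBS(T)$ means it is false that \emph{both} parts of the $\IBS(T)$ characterization hold. Since condition~1 (the negative path to $u$ exists) has already been established and is independent of the positive seeds, the part that must fail is the $\IBS(T)$ analogue of condition~2; its failure is precisely the statement that \emph{no} node $t$ on the path satisfies $d_{G^+}(T,t)<d_{G^-}(N_0,t)$, i.e.\ $d_{G^+}(T,t)\ge d_{G^-}(N_0,t)$ for all $t$ on the path. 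That is exactly condition~3. For the converse, conditions~1 and~2 give $u\in\IBS(T\cup\{v\})$ by Lemma~\ref{lem:SNCforIBS}, while condition~3 forces the $\IBS(T)$ characterization to fail at its second part, hence $u\notin\IBS(T)$; together these place $u$ in the required set difference. The only real obstacle is the logical bookkeeping of the negation in ``$u\notin\IBS(T)$''—one must justify that condition~1 never fails there, so that the negation collapses to the single clean statement of condition~3—and this is resolved by the path-independence observation above.
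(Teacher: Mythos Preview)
Your proposal is correct and follows essentially the same approach as the paper: apply Lemma~\ref{lem:SNCforIBS} to $u\in\IBS(T\cup\{v\})$ to obtain conditions~1 and~2, and take its negation for $u\notin\IBS(T)$ to obtain condition~3. Your write-up is in fact more careful than the paper's---you explicitly justify why the negation collapses to condition~3 (via the seed-independence of the negative path) and you spell out the converse direction, both of which the paper leaves implicit.
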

\begin{proof}
The above conditions 1 and 2 are direct conclusions of Lemma~\ref{lem:SNCforIBS} on $u\in \IBS(T\cup\{v\})$. Condition 3 is the direct derivation
of Lemma~\ref{lem:SNCforIBS} on $u\not\in \IBS(T)$.\hfill $\Box$
\end{proof}

\begin{lemma} \label{lem:GXsub}
The cardinality set function $|\IBS(S)|$ for a live-path graph
	$G_X$ is monotone and submodular.
\end{lemma}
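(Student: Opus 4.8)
The plan is to prove monotonicity and submodularity separately, in both cases reducing the set-level claims about $\IBS$ to the distance characterizations already in hand from Lemma~\ref{lem:SNCforIBS} and Lemma~\ref{lem:SNCforIBSE}. The single recurring fact I would isolate first is that for any positive set $S$ and any node $u$, the quantity $d_{G^+}(S,u)$ is \emph{antitone} in $S$: if $S\subseteq T$ then every positive path from a node of $S$ to $u$ is also a positive path from a node of $T$, so $d_{G^+}(T,u)\le d_{G^+}(S,u)$. The same holds for the negative distances, but since $N_0$ is fixed only the positive version is needed.

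For monotonicity I would prove the stronger containment $\IBS(S)\subseteq\IBS(T)$ whenever $S\subseteq T$, which gives $|\IBS(S)|\le|\IBS(T)|$ directly. Take $v\in\IBS(S)$. By Lemma~\ref{lem:SNCforIBS} there is a unique negative path from $N_0$ to $v$ and some node $u$ on it with $d_{G^+}(S,u)<d_{G^-}(N_0,u)$. The antitone property gives $d_{G^+}(T,u)\le d_{G^+}(S,u)<d_{G^-}(N_0,u)$, so the very same $u$ witnesses condition~2 of Lemma~\ref{lem:SNCforIBS} for the seed set $T$, whence $v\in\IBS(T)$.

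For submodularity, fix $S\subseteq T$ and $x\notin T$. Because monotonicity makes each marginal gain equal to the size of the corresponding difference set, it suffices to establish the containment
\[
\IBS(T\cup\{x\})\setminus\IBS(T)\ \subseteq\ \IBS(S\cup\{x\})\setminus\IBS(S),
\]
which immediately yields $|\IBS(S\cup\{x\})|-|\IBS(S)|\ge|\IBS(T\cup\{x\})|-|\IBS(T)|$. I would verify this elementwise using Lemma~\ref{lem:SNCforIBSE}. Let $u$ lie in the left-hand side with unique negative path $p$ from $N_0$. Condition~1 (the existence of $p$) transfers verbatim, and condition~3 for $S$, namely $d_{G^+}(S,t)\ge d_{G^-}(N_0,t)$ for every $t$ on $p$, follows from condition~3 for $T$ together with $d_{G^+}(S,t)\ge d_{G^+}(T,t)$.

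The step I expect to be the main obstacle is transferring condition~2 of Lemma~\ref{lem:SNCforIBSE}, since a priori the blocking of $u$ under $T\cup\{x\}$ could be caused by a node of $T$ rather than by $x$. The key is that condition~3 rules this out. Concretely, condition~2 for $T\cup\{x\}$ gives a node $w$ on $p$ with $\min\bigl(d_{G^+}(T,w),\,d_{G^+}(\{x\},w)\bigr)=d_{G^+}(T\cup\{x\},w)<d_{G^-}(N_0,w)$, while condition~3 for $T$ forces $d_{G^+}(T,w)\ge d_{G^-}(N_0,w)$; these two together compel $d_{G^+}(\{x\},w)<d_{G^-}(N_0,w)$, i.e.\ it is $x$ that supplies the blocking. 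Then $d_{G^+}(S\cup\{x\},w)\le d_{G^+}(\{x\},w)<d_{G^-}(N_0,w)$, so the same $w$ witnesses condition~2 for $S\cup\{x\}$. All three conditions of Lemma~\ref{lem:SNCforIBSE} thus hold for $S$, giving $u\in\IBS(S\cup\{x\})\setminus\IBS(S)$ and establishing the containment, hence submodularity.
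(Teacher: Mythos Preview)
Your proof is correct and follows essentially the same route as the paper: prove the set containments $\IBS(S)\subseteq\IBS(T)$ and $\IBS(T\cup\{x\})\setminus\IBS(T)\subseteq\IBS(S\cup\{x\})\setminus\IBS(S)$ by verifying the distance conditions of Lemmas~\ref{lem:SNCforIBS} and~\ref{lem:SNCforIBSE}, using antitonicity of $d_{G^+}(\cdot,w)$ and the fact that condition~3 for $T$ forces the blocking witness to come from $x$. Your handling of condition~2 via the identity $d_{G^+}(T\cup\{x\},w)=\min\bigl(d_{G^+}(T,w),d_{G^+}(\{x\},w)\bigr)$ is in fact a touch cleaner than the paper's, which detours through Lemma~\ref{lem:OnePath} to reach the same conclusion.
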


\begin{proof}
We first prove the monotonicity
of $|\IBS(S)|$, namely for any node $u\in V\backslash (S\cup N_{0})$ and subset $S\subseteq V$, $|\IBS(S)|\leq |\IBS(S\cup\{u\})|$.
We prove the result by showing that $\IBS(S)\subseteq \IBS(S\cup\{u\})$.
Consider any node $v\in \IBS(S)$.
By Lemma~\ref{lem:SNCforIBS}, we have $d_{G^{-}}(N_{0},v)<\infty$, and
	there exists a node $w$ in the unique negative path from $N_0$ to $v$ such
	that $d_{G^{+}}(S,w)<d_{G^{-}}(N_{0},w)$.
It is also clear that $d_{G^{+}}(S\cup\{u\},w)\leq d_{G^{+}}(S,w)$.
Thus, we have $d_{G^{+}}(S\cup\{u\},w) <d_{G^{-}}(N_{0},w)$, and
	by Lemma~\ref{lem:SNCforIBS}, $v\in \IBS(S\cup\{u\})$.

We then prove submodularity of $|\IBS(S)|$ by showing: For any subset $S\subseteq V,T\subseteq V,S\subseteq T$ and $v\in
V\backslash (T\cup N_{0})$,
\begin{displaymath}
\IBS(T\cup\{v\}) \verb"\" \IBS(T) \subseteq \IBS(S\cup\{v\}) \verb"\"
\IBS(S).
\end{displaymath}
Given any $u \in \IBS(T\cup\{v\}) \verb"\" \IBS(T)$,
	we prove that
	$u\in \IBS(S\cup\{v\}) \verb"\" \IBS(S)$ by showing all
three conditions in Lemma~\ref{lem:SNCforIBSE} are satisfied.
The satisfaction of 1 is
obvious, since $d_{G^{-}}(N_0,u)$ doesn't change. As for condition
2, we know that there exists a node $w$ on the unique negative path from $N_{0}$
	to $u$,
$d_{G^{+}}(T\cup\{v\},w)<d_{G^{-}}(N_{0},w)$ and for all node $t$ on
path from $N_{0}$ to $u$, $d_{G^{+}}(T,t)\geq d_{G^{-}}(N_{0},t)$.
Then for node $w$, $d_{G^{+}}(T\cup\{v\},w)<d_{G^{-}}(N_{0},w)\leq
d_{G^{+}}(T,w)$, which implies that
$d_{G^{+}}(T\cup\{v\},w)=d_{G^{+}}(v,w)$. According to Lemma~\ref{lem:OnePath}, the positive influence can
reach node $w$ only in the unique positive path from $v$ to $w$. Thus
$d_{G^{+}}(S\cup\{v\},w)=d_{G^{+}}(v,w)=d_{G^{+}}(T\cup\{v\},w)<d_{G^{-}}(N_{0},w)$.
Then consider condition 3. For any node $t$ in the unique
	negative path from $N_{0}$ to $u$,
	$d_{G^{+}}(T,t)\geq d_{G^{-}}(N_{0},t)$.
Since $S\subseteq T$, it is easy to verify that
	$d_{G^{+}}(S,t) \ge d_{G^{+}}(T,t)$.
Therefore, $d_{G^{+}}(S,t) \ge d_{G^{-}}(N_{0},t)$ and condition 3 also holds.
\hfill $\Box$
\end{proof}

\begin{theorem}
For the CLT model, $\sigma_{\NIR}(S)$ is monotone
and submodular.
\end{theorem}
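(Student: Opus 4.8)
The plan is to reduce the theorem to the two facts already established: the distributional equivalence of Lemma~\ref{lem:equirand} and the per-realization submodularity of Lemma~\ref{lem:GXsub}. First I would invoke Lemma~\ref{lem:equirand} to justify rewriting the objective as an expectation over the random live-path graph, namely $\sigma_{\NIR}(S) = E_{G_X}(|\IBS(S)|)$, where on the right-hand side $\IBS(S)$ is the deterministic influence blocking set computed in a fixed live-path graph $G_X$. The key structural observation enabling this rewrite is that the distribution of $G_X$ does not depend on the positive seed set at all: a single draw of $G_X$ simultaneously determines the negative activation set under the empty positive seed and under positive seed $S$, so $G_X$ furnishes a coupling of the two scenarios carrying exactly the randomness needed to define $\IBS(S)$.

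Next I would note that, for each fixed $G_X$, the function $|\IBS(S)|$ is monotone and submodular by Lemma~\ref{lem:GXsub}. Because each vertex selects its positive and negative in-edge from finitely many choices, there are only finitely many distinct live-path graphs, and the expectation is a finite nonnegative-weighted sum
\[
\sigma_{\NIR}(S) = \sum_{G_X} \Pr[G_X]\cdot |\IBS(S)|_{G_X},
\]
with all weights $\Pr[G_X] \ge 0$.

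The final step is to transfer monotonicity and submodularity through this nonnegative combination, using the elementary fact that both properties are preserved under nonnegative linear combinations. For monotonicity, given $S \subseteq T$, Lemma~\ref{lem:GXsub} gives $|\IBS(S)|_{G_X} \le |\IBS(T)|_{G_X}$ for every $G_X$; multiplying by $\Pr[G_X] \ge 0$ and summing yields $\sigma_{\NIR}(S) \le \sigma_{\NIR}(T)$. For submodularity, given $S \subseteq T$ and $x \notin T$, the marginal-gain inequality $|\IBS(S \cup \{x\})|_{G_X} - |\IBS(S)|_{G_X} \ge |\IBS(T \cup \{x\})|_{G_X} - |\IBS(T)|_{G_X}$ holds for every $G_X$ by Lemma~\ref{lem:GXsub}; taking the same nonnegative-weighted sum of both sides preserves the inequality and gives $\sigma_{\NIR}(S \cup \{x\}) - \sigma_{\NIR}(S) \ge \sigma_{\NIR}(T \cup \{x\}) - \sigma_{\NIR}(T)$, as required.

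I expect the only delicate point to be the first step rather than the averaging argument, which is routine. Specifically, care is needed to ensure that Lemma~\ref{lem:equirand} --- stated as an equality of activation-set distributions for a single fixed seed configuration --- indeed licenses comparing the empty and $S$ positive-seed outcomes within one realization. This is resolved by emphasizing that the live-path graph is drawn independently of the seeds, so the same $G_X$ (equivalently, the same threshold vectors $\vec{\theta}^+,\vec{\theta}^-$ on the CLT side) governs both outcomes, making $E_{G_X}(|\IBS(S)|)$ the correct expression for $\sigma_{\NIR}(S)$.
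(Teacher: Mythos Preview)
Your proposal is correct and follows essentially the same route as the paper's own proof: invoke Lemma~\ref{lem:equirand} to rewrite $\sigma_{\NIR}(S)$ as $E_{G_X}(|\IBS(S)|)$, apply Lemma~\ref{lem:GXsub} pointwise, and then use closure of monotonicity and submodularity under nonnegative (convex) combinations. The paper states the last step in one line (``any convex combinations of monotone and submodular functions are still monotone and submodular''), whereas you spell out the marginal-gain inequality and also make explicit the coupling point that the live-path graph is drawn independently of the seed sets; this extra care is welcome but does not change the argument.
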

\begin{proof}
By Lemma~\ref{lem:equirand}, we know that the CLT model is equivalent
	to the random live-path graph model.
By Lemma~\ref{lem:GXsub}, we know that for each live-path graph, the
	size of the influence blocking set is monotone and submodular.
Since $\sigma_{\NIR}(S) = E_{G_X}(|\IBS(S)|)$ and any convex combinations of
	monotone and submodular functions are still monotone and submodular,
	we know that $\sigma_{\NIR}(S)$ is monotone and submodular.
\hfill $\Box$
\end{proof}

\vspace{-2mm}
We have shown that the influence blocking maximization problem
under CLT model is monotone and submodular.
Moreover, we have $\sigma_{\NIR}(\emptyset)=0$.
Then by the famous result in~\cite{NWF78}, the greedy algorithm given in
	Algorithm~\ref{alg:greedy} achieves $1-1/e$ approximation of the optimal
	solution.
The algorithm simply selects seed nodes one by one, and each time
	it always selects the node that provides the largest marginal gain
	to the negative influence reduction.
\begin{algorithm}[t]
\caption{Greedy($k$,$N_{0}$)}
\begin{algorithmic}[1]\label{alg:greedy}
\STATE initialize $S = \emptyset$ \FOR {$i=1$ to $k$} \STATE
select $u= arg\max_{v\in V\backslash (N_{0}\cup S)}(\sigma_{\NIR}(S\cup\{v\}))$ \STATE
$S=S\cup\{u\}$ \ENDFOR
\STATE {\bf return} $S$
\end{algorithmic}
\end{algorithm}

However, the greedy algorithm requires the evaluation of $\sigma_{\NIR}(S)$,
	which cannot be done efficiently.
The standard way of using Monte-Carlo simulations to estimate $\sigma_{\NIR}(S)$
	is slow, especially when we need to simulate the interfering propagation
	of competing influences.
Even with powerful optimization method such as the lazy forward optimization
	of~\cite{JL07} or more advanced approach in \cite{CWY09}, greedy
	algorithm still takes unacceptable long time for large graphs of
	more than $10k$ nodes.
We address this efficiency issue in the next section with our new algorithm
	CLDAG.

\section{CLDAG Algorithm for the IBM Problem} \label{sec:algo}

Motivated by the extremely low efficiency of greedy algorithm, we try to
	tackle this problem with an innovative heuristic
	approach proposed by Chen et al. in \cite{CWW10,CYZ10}.
This heuristic is characterized (a) by restricting influence computation
	of a node $v$ to its local area to reduce computation cost; and
	(b) by carefully selecting a local graph structure for $v$ to allow
	efficient and accurate influence computation for $v$ under this structure.
For the LT model, Chen et al. use a local directed acyclic graph (LDAG)
	structure~\cite{CYZ10}, because it allows linear computation of
	influence in a LDAG, as well as efficient construction of LDAGs using
	an algorithm similar in style to the Dijkstra's shortest path algorithm.
We repeat the LDAG construction algorithm of~\cite{CYZ10} in our
	Algorithm~\ref{alg:findldag} for completeness.
We use $N_{in}(x)$ to denote the set of in-neighbors of node $x$.
The $\theta$ in the algorithm is a threshold from $0$ to $1$ controlling
	the size of the LDAG --- the smaller the $\theta$, the larger the
	LDAG.
The algorithm includes a node $x$ only if its influence to $v$ through
	the LDAG edges are at least $\theta$.
The key update step in line~\ref{alg:linupdate} is based on the important
	linear relationship of activation probabilities in DAG structures
	shown in~\cite{CYZ10}, and repeated below:
\begin{equation}
\ap(x) = \sum_{u\in N_{in}(x)} w_{ux} \cdot \ap(u), \label{eq:linearap}
\end{equation}

\noindent where $\ap(x)$ is the activation probability of node $x$ when
	a seed set is fixed.

\begin{algorithm}[t]
\caption{Find-LDAG($G$,$v$,$\theta$),compute LDAG for $v$ with
threshold $\theta$}
\begin{algorithmic}[1]\label{alg:findldag}
\STATE $X=\emptyset$;$Y=\emptyset$;$\forall v\in V,
\Inf(u,v)=0$;$\Inf(v,v)=1$
\WHILE{$\max_{v\in
V\setminus X}\Inf(u,v)\geq\theta$}
\STATE $x=arg\max_{u\in
V\setminus X}\Inf(u,v)$
\STATE $Y=Y\cup\{(x,u)|u\in X\}$
\STATE $X=X\cup\{x\}$
\FOR{each node $u\in N_{in}(x)$}
\STATE \label{alg:linupdate} $\Inf(u,v)\pequal w_{ux} *\Inf(x,v)$
\ENDFOR \ENDWHILE
\STATE \textbf{return} {$D=(X,Y,w)$ as the \LDAG($v$,$\theta$)}
\end{algorithmic}
\end{algorithm}

However, for the CLT model, negative and positive influence are propagated
	concurrently in the network and interfere with each other.
Thus we need to adjust our LDAG construction and influence computation
	for the CLT model.
First, for each node $v$, we use Algorithm~\ref{alg:findldag} to
	construct two LDAGs, $\LDAG^{+}(v)$ and
	$\LDAG^{-}(v)$, using positive weights and negative weights respectively.
Second, we need to carefully compute the positive activation probability
	$\ap^+(v)$ and negative activation probability $\ap^-(v)$, for
	any node $v$ under the CLT model, assuming positive and negative
	influence are propagated through $\LDAG^{+}(v)$ and
	$\LDAG^{-}(v)$ respectively.
This involves a dynamic programming formulation detailed in the following
	subsection.

\subsection{Influence computation.}
We propose a
dynamic programming method, Inf-CLDAG, to compute the exact activation probability of the central node
$v$ in local structure $\LDAG^{+}(v)$ and
$\LDAG^{-}(v)$.
Under the CLT model, two opposite influence diffusions correlate together
	when disseminating in
	the graph,
	which makes it more tricky than the computation in the origin LT model.
In this case, number of steps taken to activate a node becomes an
	important factor that
	must be taken into consideration when computing the cascade
	result.

For the following computation, we assume that the positive seed set $S$ and
	the negative seed set $N_0$ are fixed, and influence to $v$ only
	diffuses in $\LDAG^{+}(v)$ and $\LDAG^{-}(v)$.
For the IBM problem, we want to compute the negative influence reduction
	under the positive seed set $S$.
It is essentially a computation of negative influence coverage, which
	is given by $\sum_v \ap^-(v)$.

Let $P^{+}(v,t)$ be the probability that the summation of the positive
	weights of in-edges of
	positively activated neighbors of node $v$ exceeds its positive
	threshold exactly at time $t$, and similar for $P^{-}(v,t)$.
Let $\ap^{+}(v,t)$ be the probability that $v$ becomes positively activated
exactly at time $t$,  and similar for $\ap^{-}(v,t)$.
Then we have $\ap^+(v) = \sum_t \ap^{+}(v,t)$ and
	$\ap^-(v) = \sum_t \ap^{-}(v,t)$.
We now show how to compute $\ap^{+}(v,t)$ and $\ap^{-}(v,t)$.

By the definition of the CLT model, we have the following for
	any $v\in V\setminus (S \cup N_0)$ and any $t\ge 1$:
\begin{eqnarray}
& P^{+}(v,t)=\sum_{u\in \LDAG^+(v)}w^+_{uv}\ap^{+}(u,t-1), & \label{equ:PP}\\
&P^{-}(v,t)=\sum_{u\in \LDAG^-(v)}w^-_{uv}\ap^{-}(u,t-1), & \label{equ:PN} \\
& \ap^{+}(v,t)=P^{+}(v,t)(1-\sum_{k=0}^{t}P^{-}(v,k)), & \label{equ:APTP} \\
& \ap^{-}(v,t)=P^{-}(v,t)(1-\sum_{k=0}^{t-1}P^{+}(v,k)).& \label{equ:APTN}
\end{eqnarray}

Equations~(\ref{equ:PP}) and~(\ref{equ:PN}) can be reached by subtracting the probability that the summation of the weights of in-edges of
activated neighbors of node $v$ exceeds
threshold in any round from $0$ to $t-1$ from the corresponding probability for rounds from $0$ to $t$.
Equation~(\ref{equ:APTP}) is derived from the fact that if a node $v$ becomes
positively activated at round $t$, then exactly at round $t$ the
summation of positive weights must exceed the positive threshold, while
	by round $t$ the summation of negative weights does not exceed
	the negative threshold (otherwise $v$ would be negatively activated).
The case for Equation~(\ref{equ:APTN}) is similar.

The boundary conditions of the above equations are
	(a) for $v\in S$, $\ap^+(v,0)=1$,$P^+(v,0)=0$, $P^+(v,t)=\ap^+(v,t)=0$
	for all $t\ge 1$, $P^-(v,t)=\ap^-(v,t)=0$ for all $t\ge 0$;
	(b) for $v\in N_0$, $\ap^-(v,0)=1$,$P^-(v,0)=0$ $P^-(v,t)=\ap^-(v,t)=0$
	for all $t\ge 1$, $P^+(v,t)=\ap^+(v,t)=0$ for all $t\ge 0$; and
	(c) for $v\not\in S\cup N_0$,  $P^+(v,0)=\ap^+(v,0)=
	P^-(v,0)=\ap^-(v,0)=0$.
From the above equations together with the boundary conditions,
 the dynamic programming
	algorithm can be applied to compute the exact activation probability for
	every node $v$.
However, the naive implementation will take $O(m_D \ell_D)$ time, where
	$m_D$ is the size of $\LDAG^+(v)$ and $\LDAG^-(v)$ and $\ell_D$ is
	the length of the longest path in $\LDAG^+(v)$ and $\LDAG^-(v)$.
With a careful planning, as described below,
	we could reduce the time to $O(m_D)$ instead.

Algorithm~\ref{alg:infcomp} provides the pseudocode for our algorithm
	Inf-CLDAG, which computes the negative influence $\ap^-(v)$ to $v$
	from positive seed set $S$ and negative seed set $N_0$, through
	$v$'s LDAGs $\LDAG^{+}(v)$ and $\LDAG^{-}(v)$.
The key feature of the algorithm is the alternating breadth-first-search (BFS)
	traversal on $\LDAG^{-}(v)$ and $\LDAG^{+}(v)$.
Starting from the negative seed set we do one step BFS in $\LDAG^{-}(v)$ and
	compute $P^{-}(x,1)$'s and $\ap^{-}(x,1)$'s for those traversed nodes.
We then do one step BFS in $\LDAG^{+}(v)$ from the positive seeds, and compute
	$P^{+}(x,1)$'s and $\ap^{+}(x,1)$'s for the traversed nodes.
We then go back to $\LDAG^{-}(v)$ to do one more layer of BFS and then
	go back to $\LDAG^{+}(v)$ for one more layer of BFS, and so on.
With this setup, we only need one BFS traversal of $\LDAG^{+}(v)$ and
	$\LDAG^{-}(v)$ to compute all $\ap^-(u,t)$'s, and thus save the running
	time to $O(m_D)$.

\begin{algorithm}[t]
\caption{Inf-CLDAG($v,\LDAG^{+}(v),\LDAG^{-}(v),S,N_{0}$)}
\begin{algorithmic}[1]\label{alg:infcomp}
\STATE $Q^{+}_{0} := S\cap V(\LDAG^{+}(v))$
\STATE $Q^{-}_{0} := N_0 \cap V(\LDAG^{-}(v))$
\STATE initialize $\ap^{+}(u,t),\ap^{-}(u,t),P^{+}(u,t),P^{-}(u,t)$ for
	all $u$ and $t$ to $0$ or according to the boundary condition\\ // can do
	initialization just when needed, so no extra time needed
\STATE set $t=0$

\WHILE{$Q^{+}_{t}\neq\emptyset$ or $Q^{-}_{t}\neq\emptyset$}
\FORALL{node $u$ in $Q^{-}_{t}$}
\FORALL{node $x$ in $\LDAG^{-}(v)$ and $w^{-}_{ux}\neq0$  and $x\not\in S\cup N_{0}$}
\STATE add node $x$ into $Q^{-}_{t+1}$
\STATE $P^{-}(x,t+1)=P^{-}(x,t+1)+w^{-}_{ux}\ap^{-}(u,t)$
\ENDFOR
\ENDFOR
\FORALL{node $x$ in $Q^{-}_{t+1}$}
\STATE $\ap^{-}(x,t+1)=P^{-}(x,t+1)(1-\sum_{k=0}^{t}P^{+}(x,k))$
\ENDFOR

\FORALL{node $u$ in $Q^{+}_{t}$}
\FORALL{node $x$ in $\LDAG^{+}(v)$ and $w^{+}_{ux}\neq0$ and $x\not\in S\cup N_{0}$}
\STATE add node $x$ into $Q^{+}_{t+1}$
\STATE $P^{+}(x,t+1)=P^{+}(x,t+1)+w^{+}_{ux}\ap^{+}(u,t)$
\ENDFOR
\ENDFOR
\FORALL{node $x$ in $Q^{+}_{t+1}$}
\STATE $\ap^{+}(x,t+1)=P^{+}(x,t+1)(1-\sum_{k=0}^{t+1}P^{-}(x,k))$
\ENDFOR
\STATE set $t=t+1$
\ENDWHILE

\STATE $\ap^{-}(v)=\sum_{t}\ap^{-}(v,t)$
\STATE \textbf{return} $\ap^{-}(v)$
\end{algorithmic}
\end{algorithm}

\begin{figure}[!h]
\centering
\includegraphics[width = 3.2in]{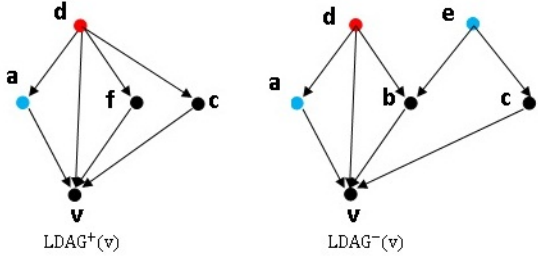}
\caption{A simple example of Inf-CLDAG algorithm (red node $d$ is the
	positive seed and blue nodes $a$ and $e$ are negative seeds).}
\label{fig:InfExample}
\end{figure}
As an example, we show the computation for the structure of
$\LDAG^{+}(v)$ and $\LDAG^{-}(v)$ of Figure~\ref{fig:InfExample}.
In the example, $d$ is the only positive seed while $a$ and $e$ are
	two negative seeds.
In initialization, $\ap^{+}(d,0)$, $\ap^{-}(a,0)$ and $\ap^{-}(e,0)$
	are set to $1$ and all other values are set to $0$.
In the first iteration, we start from the negative seeds $a$ and $e$
	to do one level BFS traversal in $\LDAG^{-}(v)$, and thus
	compute $P^{-}(b,1)$,$P^{-}(c,1)$, $P^{-}(v,1)$,
	$\ap^{-}(b,1)$,$\ap^{-}(c,1)$ and $\ap^{-}(v,1)$.
Next we go to $\LDAG^{+}(v)$ and do one level BFS traversal starting
	from the positive seed $d$, and compute
	$\ap^{+}(f,1)$,$\ap^{+}(c,1)$ and $\ap^{+}(v,1)$, which
	use the values $P^{-}(c,1)$ and $P^{-}(v,1)$ computed.
Then we start the second iteration, which is second level BFS traversal
	in $\LDAG^{-}(v)$, and this only gives us node $v$, for which
	we compute $\ap^{-}(v,2)$.
We will do another BFS traversal on $\LDAG^{+}(v)$, and then we find that
	the BFS traversal has reached all nodes in both LDAGs and the
	computation finishes.

\subsection{CLDAG algorithm.}
Once we have the computation of negative
	influence reduction for any seed set as
	given in Algorithm~\ref{alg:infcomp},
	we can plug it into the greedy algorithm for positive seed selection.
We call this algorithm CLDAG.
We will present the complete pseudocode description of CLDAG algorithm here as Algorithm ~\ref{alg:CLDAG}.

 \begin{algorithm}[t]
 \caption{CLDAG($G,k,N_{0},\theta$)}
 \begin{algorithmic}[1]\label{alg:CLDAG}
 \STATE $S=\emptyset$ \STATE //Build local structure \FORALL{node
 $v\ \in\ V$}
 \STATE Build $\LDAG^{+}(v),\LDAG^{-}(v),OutLS^{+}(v)$
 with threshold $\theta$ \ENDFOR \STATE
 //initialize DecInf
 \STATE set $DecInf(v)=0$ for all node $v\in V$
 \FORALL{node $v\ \in\
 V$ and $v\not\in N_{0}$}
 \FORALL{node $u\in \LDAG^{+}(v)$} \STATE
 $before=\ap^{-}(v,\emptyset)$ \STATE
 $after=\ap^{-}(v,\{u\})$ \STATE
 $DecInf(u) \pequal before-after$ \ENDFOR \ENDFOR \STATE //Main Loop
 \FOR{$i=1$ to $k$}
 \STATE $s=\arg\max_{v\in V\setminus (S\cup N_{0})}DecInf(v)$ \STATE
 //update influence reduction \FORALL{node $v\in OutLS^{+}(s)$}
 \FORALL{node $u\in \LDAG^{+}(v)$} \STATE //subtract previous
 incremental influence reduction \STATE
 $DecInf(u) \mequal \ap^{-}(v,S\cup\{u\})-\ap^{-}(v,S)$
 \STATE
 //add up new incremental influence reduction \STATE $DecInf(u) \pequal
 	\ap^{-}(v,S\cup\{u,s\})- \ap^{-}(v,S\cup\{s\})$
 \ENDFOR \ENDFOR \STATE
 //add node $s$ as positive seed \STATE $S=S\cup\{s\}$ \ENDFOR \STATE\textbf{return} $S$
 \end{algorithmic}
 \end{algorithm}

The algorithm contains an initialization part and an iteration part.
In initialization(line 3-5), we construct $\LDAG^{+}(v)$ and $\LDAG^{-}(v)$ for all
	nodes $v$.
We also maintain an auxiliary set $OutLS^{+}(v)$, which is the
	set of nodes to which $v$ may have positive influence, i.e.,
	$u\in OutLS^{+}(v)$ if and only if $v\in \LDAG^{+}(u)$.
Since positive seed set is changing in the algorithm,
we use $\ap^{-}(v,S)$ to represent the negative activation probability of
 	$v$ in its LDAGs under positive seed set $S$.
Then, for each node $u \in \LDAG^{+}(v)$, we compute the incremental
	influence reduction $\ap^{-}(v,\emptyset)- \ap^{-}(v,\{u\})$
	when adding $u\in \LDAG^{+}(v)$ as a positive seed,
	and sum them up for each $u$ to get $DecInf(u)$, the overall
	incremental influence reduction of node $u$.

In the main iteration(line 16-29), we iterate $k$ times to select $k$ seeds.
In each iteration, we select a new seed $s$ with the largest $DecInf(s)$.
Once $s$ is selected, other nodes' $DecInf(u)$ may need to be updated.
Since $s$ may positively influence all nodes in $OutLS^{+}(s)$, thus
	all nodes $u\in \LDAG^+(v)$ with $v\in OutLS^{+}(s)$ needs to update
	their $DecInf(u)$.
Note that here we take advantage of the local DAG structure, so that
	we do not need to update the incremental influence reduction of
	every node in the graph.
The update is done by using Algorithm~\ref{alg:infcomp}.

\noindent
\textbf{Complexity Analysis.} Let $n=|V|$,
$m_{i\theta}^{+}=\max_{v}|\LDAG^{+}(v)|$, $m_{i\theta}^{-}=\max_{v}|\LDAG^{-}(v)|$,
	and $n_{o\theta}^{+} = \max_{v}|OutLS^{+}(v)|$.
Let $t_{i\theta}^+$ and $t_{i\theta}^-$ be the time of efficient construction
	of $\LDAG^{+}(v)$'s and $\LDAG^{-}(v)$'s, respectively.
Note that $m_{i\theta}^{+}=O(t_{i\theta}^+)$ and $m_{i\theta}^{-}=O(t_{i\theta}^-)$,
	and for sparse graphs, efficient Dijkstra shortest path algorithm
	implementation could make $t_{i\theta}^+$ and $t_{i\theta}^-$ close to
	the order of $m_{i\theta}^{+}$ and $m_{i\theta}^{-}$.
We first analyze the complexity of storing all LDAG structures.

In the initialization step, we need to compute $\LDAG^{+}(v)$'s and
	$\LDAG^{-}(v)$'s for all nodes, and thus it takes
	$O(n(t_{i\theta}^++t_{i\theta}^-))$ time.
We use a max-heap structure to store $DecInf(u)$'s, and it takes
	$O(n)$ time to initialize.
The $DecInf(u)$ computation by Algorithm~\ref{alg:infcomp} takes
	$O(n(m_{i\theta}^{+}+m_{i\theta}^{-}))$ time.
Overall, initialization takes $O(n(t_{i\theta}^++t_{i\theta}^-))$ time.

For the iteration step, each iteration needs to update
	$DecInf(u)$'s for at most $n_{o\theta}^{+} m_{i\theta}^{+}$ nodes,
	and each update involves influence computation by
	Algorithm~\ref{alg:infcomp}, which takes $O(m_{i\theta}^{+}+m_{i\theta}^{-})$
	time, plus updating $DecInf(u)$ on the max-heap, which takes
	$O(\log n)$ time.
Therefore, the iteration step takes
	$O(kn_{o\theta}^{+} m_{i\theta}^{+}(m_{i\theta}^{+}+m_{i\theta}^{-}+\log n))$
	time.

Hence the total time complexity of the algorithm is
	$O(n(t_{i\theta}^++t_{i\theta}^-)+
	kn_{o\theta}^{+} m_{i\theta}^{+}(m_{i\theta}^{+}+m_{i\theta}^{-}+\log n))$.

For space complexity, we store all LDAGs and $OutLS^{+}(v)$'s, so
	the space complexity is $O(n(m_{i\theta}^{+}+m_{i\theta}^{-}+n_{o\theta}^{+}))$.
In actual implementations one may not afford to store all the LDAG structures
	(as in our implementation), so an alternative is to store only
	$OutLS^{+}(v)$'s and compute LDAGs whenever needed.
It is easy to see that in this case, the time complexity is
	$O(n(t_{i\theta}^++t_{i\theta}^-)+
	kn_{o\theta}^{+} m_{i\theta}^{+}(t_{i\theta}^++t_{i\theta}^-+\log n))$, which
	is not significantly worse than storing LDAGs, while the space
	complexity is reduced to $O(n n_{o\theta}^{+})$.

\section{Experiments} \label{sec:exp}
To test the efficiency and effectiveness of CLDAG
for influence blocking maximization problem under the CLT model, we conduct
experiments on three real-world datasets as well as synthetic networks.
\subsection{Experiment setting}
The three real-world datasets are mobile network and collaboration
networks.
The mobile network is a graph derived from a partial call detailed record
	(CDR) data of a Chinese
city from China Mobile, the largest mobile communication service
provider in China. In the mobile network, every node corresponds to a
mobile phone user and the edges correspond to their phone calls
between one another. We use the number of calls between
two users as the edge weight and normalize it among all edges
	incident to a node (the edge thus becomes directed with
	asymmetric edge weights).
The NetHEPT and NetPHY are both collaboration networks
	extracted from the e-print arXiv (http://www.arXiv.org).
The former is extracted
from the "High Energy Physics - Theory" section (form 1991 to 2003),
	and the latter is extracted from "Physics"
	section, and both are the same datasets used in~\cite{CWY09}.
The nodes in both networks are authors and an edge between two nodes
	means the two authors coauthored at least one paper.
We use the number of coauthored papers as the edge weight and
	normalize it among all edges incident to a node.
Some basic statistics of these networks are shown in Table~\ref{tb:stat}.

The edge weights described above do not differentiate between positive
	and negative weights yet.
To differentiate them and study the effect of different diffusion strength
	for positive and negative diffusions,
	we introduce positive propagation rate $p^{+}$ and negative
	propagation rate $p^{-}$, both of which are values from $0$ to $1$.
We multiply edge weight with $p^{+}$ and $p^-$ of each edge to obtain its
	positive and negative edge weight, respectively.
The effect is that all positive edge weights of in-edges of a node
	sums up to $p^+$, and thus with probability $1-p^+$ the node
	will not be activated even if all of its in-neighbors are positively
	activated.
The case for $p^-$ is similar.

\begin{table}[t]
\centering
{\caption{\label{tb:stat} Statistics of the three
        real-world networks. }}
\vspace{1mm}
\begin{tabular}{|c|m{0.5in}|m{0.5in}|m{0.5in}|m{0.5in}}
\hline
  Dataset & Mobile & NetHEPT & NetPHY\\\hline\hline
  Node     & 15.5K & 15.2K & 37.1K\\\hline
  Edge     & 37.0K & 58.9K & 231.5K \\\hline
  Average Degree & 4.77 & 7.75 & 12.48\\\hline
\end{tabular}
\end{table}

We compare the performance of the following
algorithm and heuristics:
\begin{itemize}
\setlength{\itemsep}{-1ex}
\item CLDAG: Our CLDAG algorithm with $\theta=0.01$;\footnote{We found
	that $\theta < 0.01$ will not have significant improvement for
	the blocking effect, for all networks tested.}
\item Greedy: Algorithm~\ref{alg:greedy} under the CLT model with the
	lazy-forward optimization of~\cite{JL07}, and 10000 simulation runs for
	each influence estimate.

\item Degree: a baseline heuristic, simply
	choosing nodes with largest degrees as positive seeds.
\item
Random:  a baseline heuristic, simply choosing
	nodes at random as positive seeds.

\item Proximity Heuristic: A simple heuristic under which we
choose the direct out-neighbors of negative seeds as positive seeds to
	block the negative influence.
Among these direct out-neighbors, we sort them by the negative weights of
	their in-edges connecting them with negative seeds,
	and select the top $k$ nodes as
	the positive seeds.
\end{itemize}

Proximity heuristic introduced above is based on the simple idea of
	trying to block the influence of negative seeds at their direct
	neighbors.
It should be noticed that the proximity heuristic can be considered as
a simplified version of our CLDAG algorithm.
In fact, for each node $v$, if we construct its $\LDAG^{+}(v)$ to be only
	the node $v$ itself, while its $\LDAG^{-}(v)$ to be $v$ itself if $v$
	has no in-neighbors in the negative seed set $N_0$, or else
	to be $v$ with one of $v$'s in-neighbors in $N_0$ with the largest
	negative edge weight to $v$.
It is easy to verify that our CLDAG algorithm under these LDAG structures
	exactly matches the proximity heuristic.
Therefore, proximity heuristic can be treated as an intermediate algorithm
	between the baseline random algorithm and the full-blown
	CLDAG algorithm, and is
	helpful for understanding the features of CLDAG.

Since the CLT model is a probabilistic model, when
we evaluate the blocking effect for any given positive and negative seed
	sets, we test it for 1000 times
	and take their average as the result. The negative
seeds in $N_0$ are chosen either randomly or 	
from nodes with the largest degrees.
The scalability test is run on Intel Xeon E5504 2G*2 (4 cores for every CPU), 36G memory server, while all others are run on Dell D630 laptop with 2G memory.
All experiment code is written in C++.

\subsection{Results with the greedy algorithm.}

We first run tests that include the greedy algorithm.
Since the greedy algorithm runs very slow on large graphs, we extract
	two subgraphs from the datasets for comparison.
One subgraph is a 1000 node graph extracted from the mobile network, and
	another is a 5000 node graph extracted from the NetHEPT network.
The extraction is done by randomly selecting a node in the graph and doing
	BFS from the node until we obtain the desired number of nodes, and we
	include all edges for these nodes in the subgraph.
We choose 50
nodes with the highest degrees as negative seeds and select 200 positive
seeds to block their influence.
Both $p^{+}$ and $p^{-}$ are set to 1.
The experiment result are showed in Figure~\ref{fig:greedy}.

\begin{figure}[t]
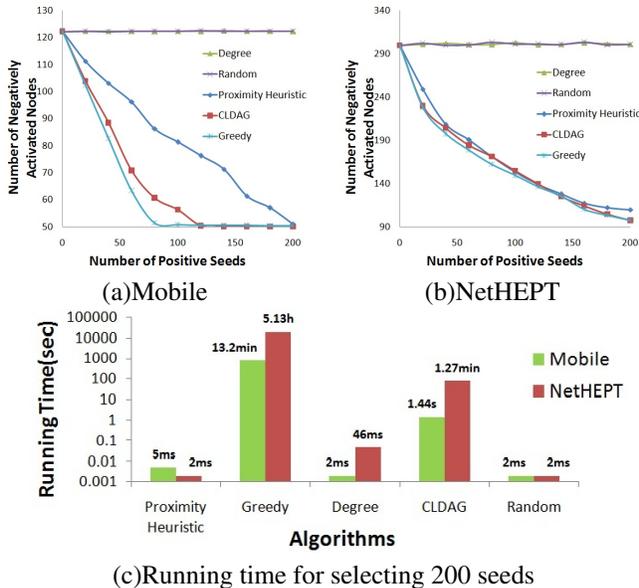

\centering
\begin{tabular}{cc}
\includegraphics[width = 1.6in]{greedyMob} & \includegraphics[width = 1.6in]{greedyHep}\\
(a)Mobile &
(b)NetHEPT\\
\multicolumn{2}{c}{\includegraphics[width=3.2in]{greedyTime}}\\
\multicolumn{2}{c}{(c)Running time for selecting 200 seeds}
\end{tabular}
\caption{Experiment result of comparison with Greedy algorithm.}
\label{fig:greedy}
\end{figure}

From Figure~\ref{fig:greedy} (a) and (b), we can see that
	the CLDAG algorithm consistently
matches the performance of the greedy algorithm for both datasets.
In the 1000-node mobile network test, CLDAG significantly outperforms
	the Proximity heuristic, e.g., when CLDAG completely blocks all
	negative influence with 130 seeds, proximity heuristic still allows
	negative influence to reach about 30 more nodes.
In term of negative influence reduction, this is $(120-50)/(120-80)
	= 175\%$ improvement.
In the 5000-node NetHEPT dataset, proximity heuristic performs as well
	as CLDAG and the greedy algorithm.
In both cases, random and degree heuristic perform badly, essentially
	having no blocking effect at all.
This is in contrast with degree heuristic result for influence maximization
	reported in the previous papers~\cite{CWY09,CWW10,CYZ10}, where degree
	heuristic still have moderate gain when selecting more seeds.
Our interpretation is that for influence blocking maximization, knowing where
	the negative seeds are becomes very important, and thus proximity heuristic
	could behave reasonably well while degree heuristic oblivious to the
	location of negative seeds becomes useless.

From Figure~\ref{fig:greedy} (c), we see that CLDAG is much
	faster than the greedy algorithm, with more than two orders of magnitude
	speedup.
With 5000 nodes, the greedy algorithm already takes more than five hours,
	while CLDAG only takes one minute to select 200 seeds.

We further compare the scalability of CLDAG with the greedy algorithm.
For this test, we use a family of synthetic power-law graphs generated by the
DIGG package \cite{digg}. We generate graphs with doubling number
of nodes, from 0.2K, 0.4K, up to 6.4K, using power-law exponent of
2.16. Each size has 10 different random graphs and our running
time result is the average among the runs on these 10 graphs. We
randomly choose 50 nodes as negative seeds and find 50 positive
seeds to block the negative influence. We set both $p^{+}$ and $p^{-}$ to 1.
The scalability result is
shown in Figure~\ref{fig:scalability}.

\begin{figure}[t]
  \centering
    \includegraphics[width=0.36\textwidth]{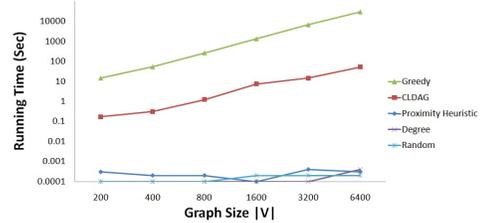}
  \caption{Experiment result on algorithm scalability.}
	\label{fig:scalability}
\end{figure}

The result clearly shows that CLDAG is two orders of magnitude
faster than the greedy algorithm and its running time
	has linear relationship with
	the size of the graph, which indicates
	good scalability of the CLDAG algorithm.
Therefore, comparing with the greedy algorithm, CLDAG matches the
	blocking effect of the greedy algorithm while has at least
	two orders of magnitude speedup in running time.

\subsection{Results on larger dataset without the greedy algorithm.}
We conduct experiments on the full graphs of the three datasets, but
	we do not include the greedy algorithm since its running time becomes
	too slow.
The initial
negative seeds are chosen either randomly or with highest degrees.
	We first set $p^{+}$ and $p^{-}$
to 1.

\begin{figure}[t]
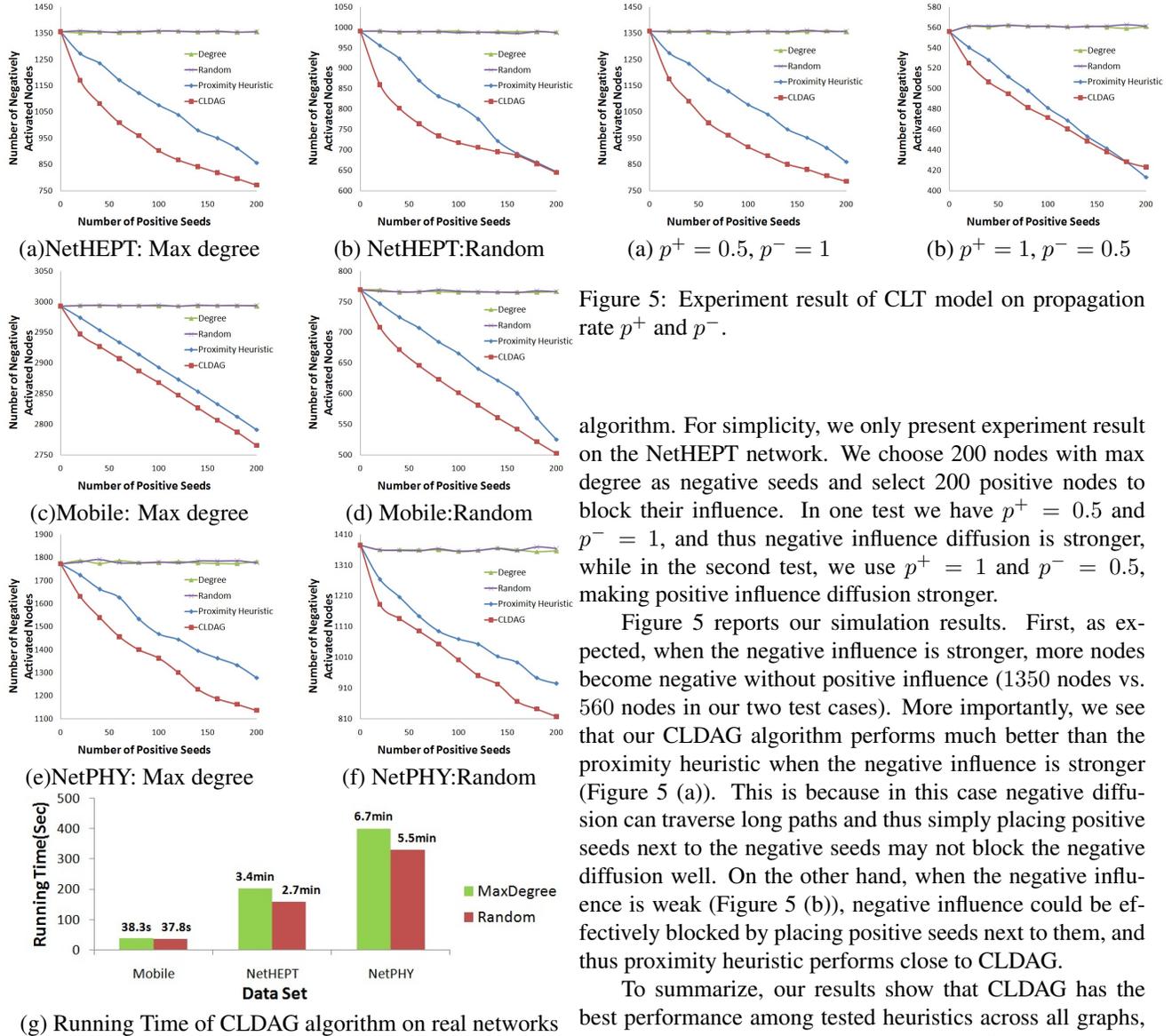

\centering
\begin{tabular}{cc}
\includegraphics[width = 1.6in]{HepMax}
&
\includegraphics[width = 1.6in]{HepRan}
\\
(a)NetHEPT: Max degree   & (b) NetHEPT:Random  \\
\includegraphics[width = 1.6in]{MobMax}
&
\includegraphics[width = 1.6in]{MobRan}
\\
(c)Mobile: Max degree   & (d) Mobile:Random  \\
\includegraphics[width = 1.6in]{PhyMax}
&
\includegraphics[width = 1.6in]{PhyRan}\\
(e)NetPHY: Max degree   & (f) NetPHY:Random  \\
\multicolumn{2}{c} {\includegraphics[width = 3.2in]{realTime}}\\
\multicolumn{2}{c} {(g) Running Time of CLDAG algorithm on real networks}
\end{tabular}
\caption{Experiment result of CLT model on three real dataset.
We choose 200 negative seeds with max degree in experiment (a),(c),(e) and 400 random negative seeds in
experiment (b),(d),(f). }
\label{fig:real}
\end{figure}

As shown in Figure~\ref{fig:real} (a) to (f),
	the performance of CLDAG
	strictly dominates the proximity heuristic in all cases.
For random negative seed selection, the negative influence reduction of CLDAG
	is on average 78.24\% higher than that of the proximity algorithm
	(percentage taken as the average of results from $1$ seed to $200$ seeds).
For max-degree negative seed selection, CLDAG improves the
	performance of proximity heuristic even more, for 80.75\% on
average.
Degree and random heuristic still show no blocking effect on all test cases.
The running time of CLDAG is consistently low, as shown in
	Figure~\ref{fig:real} (g).
The results demonstrate that across all networks and all negative seed
	selection methods, CLDAG has consistently
	good performance in negative influence reduction over other
	heuristics, and it achieves this good performance efficiently.

Next, we vary propagation rate $p^{+}$ and $p^{-}$ to check their effect on influence dissemination and the performance
of our algorithm. For simplicity, we only present experiment result on
	the NetHEPT network. 
We choose 200 nodes with max degree as negative seeds and select 200 positive nodes to block their influence.
In one test we have $p^{+}=0.5$ and $p^{-}=1$, and thus negative influence
	diffusion is stronger, while in the second test,
	we use $p^{+}=1$ and $p^{-}=0.5$, making positive influence diffusion
	stronger.
\begin{figure}[t]
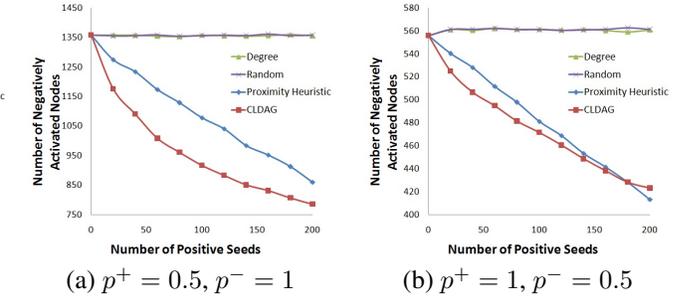

\centering
\begin{tabular}{cc}
\includegraphics[width = 1.6in]{HepMaxP5N1}
&
\includegraphics[width = 1.6in]{HepMaxP1N5}
\\
(a) $p^{+}=0.5$, $p^{-}=1$  & (b) $p^{+}=1$, $p^{-}=0.5$
\end{tabular}
\caption{Experiment result of CLT model on propagation rate $p^{+}$ and $p^{-}$.}
\label{fig:resultP}
\end{figure}

Figure~\ref{fig:resultP} reports our simulation results.
First, as expected, when the negative influence is stronger, more
	nodes become negative without positive influence ($1350$ nodes
	vs. $560$ nodes in our two test cases).
More importantly, we see that our CLDAG algorithm performs much better
	than the proximity heuristic when the negative influence is stronger
	(Figure~\ref{fig:resultP} (a)).
This is because in this case negative diffusion can traverse long paths
	and thus simply placing positive seeds next to
	the negative seeds may not block the negative diffusion well.
On the other hand, when the negative influence is weak
	(Figure~\ref{fig:resultP} (b)), negative influence could be effectively
	blocked by placing positive seeds next to them, and thus
	proximity heuristic performs close to CLDAG.

To summarize, our results show that CLDAG has the best performance
	among tested heuristics across all graphs, and especially when
	negative influence diffusion is strong.
Proximity heuristic as a simplified version of CLDAG has reasonable performance
	in a few cases especially when negative influence diffusion is weak, and
	can be used as a fast alternative to CLDAG in this case.
However, there are situations
	in which proximity heuristic is significantly worse than CLDAG.
Traditional degree heuristic cannot be used for influence blocking maximization
	at all from our test results.

\subsection{Effectiveness of influence blocking at different negative
	seed size.}
Finally, we test the effectiveness of influence blocking with
	CLDAG, when the size of negative seeds increases.
We vary the negative seed size from $1$ to $1000$, and see how many
	positive seeds are required by CLDAG to reduce negative influence
	to $10\%$.
We cap the number of positive seeds at $1000$.
For this test, we use the NetHEPT network, select negative seeds with
	largest degrees, and set $p^+=p^-=1$.
The results are shown in Table~\ref{tb:feasibility}, where
	$\sigma_{N}(S,N_{0})$ denotes the expected number of negative activations
	with positive seeds $S$ and negative seeds $N_0$.
\begin{table}[t]
\caption{\label{tb:feasibility} Result on the effectiveness of influence blocking}
\begin{center}
\begin{tabular}{|c|c|c|c|}
  \hline
  \multicolumn{1}{|c|}{$|N_{0}|$}
  &\multicolumn{1}{c|}{$\sigma_{N}(\emptyset,N_{0})$}
  &\multicolumn{1}{c|}{$|S|$}
  &\multicolumn{1}{c|}{$\sigma_{N}(S,N_{0})$}
\\\hline\hline
  1     &  72.8979 & 23 & 6.7396\\\hline
  2     &  77.4516 & 68 & 6.0182\\\hline
  5     &  156.48 & 145 & 15.6667\\\hline
  10     &  213.077 & 199 & 20.6628\\\hline
  20     &  581.366 & 557 & 57.6617\\\hline
  50     &  963.633 & 926 & 95.8451\\\hline
  100     &  1006.37 & 1000 & 108.823\\\hline
  200     &  1669.85 & 1000 & 680.518\\\hline
  500     &  3635.95  & 1000 & 2640.8\\\hline
  1000     &  5836.48 & 1000 & 4845.58\\\hline
\end{tabular}
\end{center}
\end{table}

The result shows that it requires about $20$ to $30$ times of positive
	seeds to reduce negative influence to about $10\%$ level, and it
	becomes increasingly hard to block negative influence.
For example, with $1000$ negative seeds, we spend an equal number of
	$1000$ positive seeds but can only reduce $17\%$ negative influence.
Therefore, first mover has a clear advantage, and the best way to block
	negative influence is before it becomes pervasive.

\section{Conclusion and Discussions} \label{sec:conclude}
In this work, we study influence blocking maximization problem
	under the competitive linear threshold model.
We show that the objective function of the IBM problem is submodular
	under the CLT model, and thus the greedy approximation algorithm is
	available.
We then design an efficient algorithm CLDAG to overcome the slowness of
	the greedy algorithm.
Our simulation results demonstrate that CLDAG matches the greedy algorithm
	in the blocking effect while significantly improving running time.
CLDAG also outperforms other heuristic algorithms such as proximity
	heuristic that selects direct neighbors of negative seeds, showing that
	CLDAG is a stable and robust algorithm for the IBM problem.

Finally, we compare two closely related results in the literature, which
	showing some interesting subtleties in competitive influence diffusion.
First, in~\cite{WWW11}, Budak et al. study the IBM problem for the
	extended IC model.
They show, however, that when we extend the IC model to allow positive
	 and negative diffusions having two set of different parameters, the IBM
	is not submodular.
This indicates a subtle difference between different diffusion models.
In this sense, CLT model is more expressive, since it is easier to model
	different diffusion strength in the CLT model and see its effect, as
	we did in our evaluation (Figure~\ref{fig:resultP}).
They also show that when restricting the positive weights to be $1$, or
	to be the same as negative weights, the problem becomes submodular.
For these cases, we are able to design efficient algorithms close to
	MIA and MIA-N of~\cite{CWW10, Chen2011}, and our simulations results
	are similar when comparing with the greedy algorithm and other heuristics,
	but we do not report them here.

Second, in~\cite{BFO10}, Borodin et al. propose several competitive
	diffusion models extended from the LT model.
In particular, their separate threshold model is essentially the CLT model
	in this paper (with a slightly different tie-breaking rule).
Interestingly, they show that the problem of maximizing positive influence
	given a fixed negative seed set is {\em not} submodular
	(applicable to our CLT model), while
	we show here that influence blocking maximization {\em is} submodular.
Intuitively, this is because even though a positive seed $x$ blocks the negative
	influence, to maximize positive influence it may also need other positive
	seeds to activate nodes that are blocked from negative influence
	by node $x$.
Therefore, the marginal gain of $x$ is larger for the positive influence
	maximization objective when there are other positive seeds
	corporating with $x$,
	making it not submodular.

Several improvements and future directions are possible.
One direction is looking into even faster and more space-efficient algorithms
	for influence blocking maximization.
Another direction is to tackle the IBM problem in other competitive
	diffusion models, especially models without submodularity property.

\bibliographystyle{abbrv}
\bibliography{SDM11bib}

 \clearpage

 \section*{Appendix}
 \appendix

 \section{Proof of Theorem~\ref{thm:nph}}

 \begin{proof}
 Consider an instance of the NP-complete vertex cover problem defined by an
 	undirected $|V|$-node graph $G=(V,E)$
 and an integer $k$.
 The vertex cover problem asks if there exists a set $S$ of $k$ nodes in $G$ so that every edge has at least one endpoint in $S$.
 We show that this can be reduced to the IBM problem under the CLT model.

 Given an instance of the vertex cover problem involving a graph $G$,
 	let $d_G(v)$ denote the degree of node $v$ in $G$ and define $d_m=\max_{v\in V}d_G(v)$.
 We construct a corresponding instance of the IBM problem as follows.
 We first construct a directed graph $G'$ from $G$.
 Besides the original graph structure,
 	for each vertex $v_i$ in $G$, we
 build a spindle structure $S_i$ with $|V|+2$ nodes and a chain $C_i$
 	with $\ell+1$ nodes, and $S_i$ and $C_i$ share the node $v_i$. We
 use a toy example given in Figure~\ref{fig:NPhard} to describe
 	our construction.

 \begin{figure}[h!]
   \centering
     \includegraphics[width=0.5\textwidth]{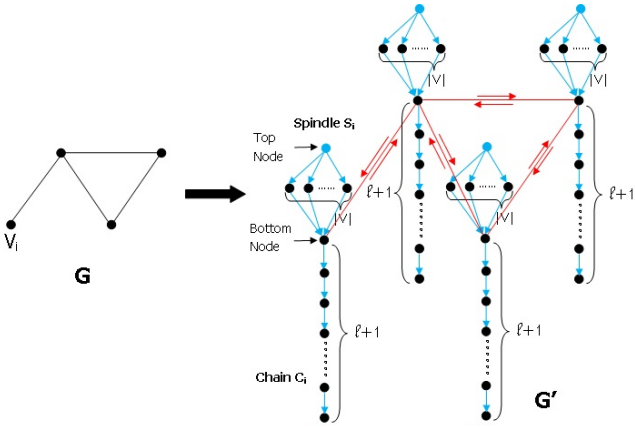}
   \caption{Graph construction for the NP-hardness proof (positive edges
 	are red, and negative seeds and negative edges are blue).}\label{fig:NPhard}
 \end{figure}

 As shown in the Figure~\ref{fig:NPhard},
 	each spindle structure consists of a top node, $|V|$ intermediate
 nodes and a bottom node. The top node is chosen as a negative seed and has $|V|$ negative edges (meaning positive weight is $0$) with
 	negative weight $1$ to each of the
 intermediate nodes. Each intermediate node has a negative edge with
 	$\frac{1}{|V|}$ negative weight to the bottom node.
 Then we use each bottom node of all spindle structures to form a similar graph as $G$ except that
 	we direct all edges of the origin $G$ in both directions to build
 	positive edges (meaning negative weights are $0$).
 The positive weights of positive edges are set according to the
 	degree of the according node in $G$. Namely for the bottom node $v_i$ of
 spindle structure $S_i$, we set all the weights of positive in-edges of
 	$v_i$ equally to
 	$\frac{1}{d_G(v_i)}$.
 Next, starting from $v_i$, we add a chain with $\ell+1$ nodes
 	(including $v_i$) and $\ell$ directed negative edges of weight $1$.
 We set $\ell=\lceil\frac{|V|d_m}{|V|-1}-1\rceil$.
 Thus the total size of constructed graph $G'$ is $O(|V|^2)$.


 We first show Lemma~\ref{lem:NPhard} for our NP-hardness proof.
 \begin{lemma}\label{lem:NPhard}
 In the constructed graph $G'$, given positive seed set $S$ if there exists a bottom node $v$ in a spindle structure whose positive activation probability at step $1$ is not strictly $1$, a higher negative influence reduction with one more positive seed can be achieved by choosing $v$ instead of selecting any other intermediate node in spindle structure or any node in chains.
 \end{lemma}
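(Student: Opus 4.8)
The plan is to argue entirely within the CLT dynamics on $G'$ and to compare the marginal values $\sigma_{\NIR}(S\cup\{x\})-\sigma_{\NIR}(S)$ for the three candidate types of node $x$. First I would describe the negative cascade: since every spindle top is a negative seed, at step $1$ each of the $|V|$ intermediate nodes of a spindle $S_i$ becomes -active unless it is itself a positive seed (intermediate nodes carry no positive in-edge); at step $2$ the bottom node $v_i$ sees total negative weight $\frac{|V|-r_i}{|V|}$, where $r_i$ is the number of positive-seed intermediate nodes in $S_i$; and the chain $C_i$ becomes -active exactly when $v_i$ does, one node per step. Positive influence can pre-empt the step-$2$ negative arrival at $v_i$ only at step $0$ (if $v_i\in S$) or at step $1$ (if a $G$-neighbor of $v_i$ is a positive seed), each neighbor-seed contributing positive weight $\frac{1}{d_G(v_i)}$.

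Writing $q=\ap^+(v,1)$ for the current probability that the chosen bottom node $v=v_i$ is positively activated by step $1$, the hypothesis $q<1$ means there are $s<d_G(v_i)$ neighbor-seeds, so $1-q=\frac{d_G(v_i)-s}{d_G(v_i)}\ge\frac{1}{d_G(v_i)}\ge\frac{1}{d_m}$, and (when no intermediate of $S_i$ is blocked, i.e.\ $r_i=0$) $v_i$ is currently -active with probability exactly $1-q>0$. Adding $v_i$ as a positive seed makes it +active at step $0$, which blocks $v_i$ and its whole chain with certainty and can only help block neighboring bottom nodes; hence its marginal gain is at least $(1-q)(1+\ell)$. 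By contrast, a chain node $c_{i,k}$ can only sever the negative relay below itself, blocking the suffix $c_{i,k},\dots,c_{i,\ell}$, for a gain $(1-q)(\ell-k+1)\le(1-q)\ell<(1-q)(1+\ell)$, so the bottom node strictly beats any chain node of its spindle.

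The crux is the bottom-versus-intermediate comparison. Blocking an intermediate node of $S_i$ blocks only that node (value $1$) and strips one $\frac{1}{|V|}$ share of $v_i$'s negative weight, which lowers the expected -active count of $v_i$ and its $\ell$ chain nodes by $(1-q)\frac{1+\ell}{|V|}$; its gain is thus $1+(1-q)\frac{1+\ell}{|V|}$. So I must show $(1-q)(1+\ell)\ge 1+(1-q)\frac{1+\ell}{|V|}$, equivalently $(1-q)(1+\ell)\frac{|V|-1}{|V|}\ge 1$. This is exactly where the calibrated length $\ell=\lceil\frac{|V|d_m}{|V|-1}-1\rceil$ enters: it guarantees $(1+\ell)\frac{|V|-1}{|V|}\ge d_m$, and together with $1-q\ge\frac{1}{d_G(v_i)}$ the product is at least $\frac{d_m}{d_G(v_i)}\ge1$. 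To turn these within-spindle bounds into the stated lemma, I would match the chosen $v$ to the spindle of the best competing intermediate or chain node whenever that spindle's bottom node is unsaturated; otherwise a saturated bottom node forces the competitor's gain down to $1$, and I instead pick any unsaturated bottom node guaranteed by the hypothesis, whose gain $\ge\frac{1+\ell}{d_m}>1$ still wins.

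The step I expect to be the main obstacle is the probabilistic bookkeeping at the bottom node together with the dependence on the current seed set $S$: the positive and negative diffusions interact through the step-$2$ timing and the negative-dominance tie-break, and the clean value $1-q$ for $v_i$'s current -active probability relies on $r_i=0$. I would therefore argue that it never pays to place a positive seed on an intermediate or chain node, so that for the reduction we may assume $S$ consists only of bottom nodes; this keeps every $r_i=0$, validates the inequalities above, and is precisely what lets IBM on $G'$ collapse to vertex cover on $G$.
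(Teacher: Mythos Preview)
Your approach is essentially the same as the paper's: lower-bound the bottom node's marginal gain by $(1-q)(\ell+1)$, upper-bound an intermediate node's gain by $1+\frac{1}{|V|}(1-q)(\ell+1)$, and then invoke the calibrated chain length $\ell=\lceil |V|d_m/(|V|-1)-1\rceil$ together with $1-q\ge 1/d_m$ to close the inequality. You go further than the paper in spelling out the cascade timing, the separate chain-node case, and the cross-spindle matching (the paper tacitly compares only within a single spindle and leaves the $r_i=0$ issue and the bound $1-q\ge 1/d_m$ implicit), so your write-up is a strict refinement of the same argument.
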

 \begin{proof}
 We assume that the positive activation probability for node $v$ at step $1$ is $p^{+}$. Firstly, it is obvious that choosing
 bottom node is a better strategy than choosing node in any chain. Then by adding node $v$ to the positive
 seed set, we can have a negative influence reduction $\Delta_v\geq(1-p^{+})(\ell+1)$. By adding any intermediate node to positive seed set, we can have $\Delta_{inter}\leq1+\frac{1}{|V|}(1-p^{+})(\ell+1)$. With $\ell=\lceil\frac{|V|d_m}{|V|-1}-1\rceil$, we can easily get $\Delta_v\geq(1-p^{+})(\ell+1)>1+\frac{1}{|V|}(1-p^{+})(\ell+1)\geq\Delta_{inter}$. Therefore choosing bottom node $v$ will always lead to
 greater gain in negative influence reduction than any other intermediate or chain nodes.
 \hfill $\Box$
 \end{proof}
 If there is a vertex cover $S$ of size $k$ in $G$, then one can deterministically make
 $\sigma_{\NIR}(S)=|V|(\ell+1)$ by choosing the positive seed set as the vertex cover of graph $G$. Since without the positive seeds all nodes in $G'$ will be negatively activated, while with
 positive seed set $S$ we can save the bottom nodes and also the nodes on the chains.
 Conversely this is the only way to get a set
 with $\sigma_{\NIR}(S)\ge |V|(\ell+1)$.
  Otherwise if positive seeds among bottom nodes are not a vertex cover of
 	the origin graph $G$, the probability that all bottom nodes can
 be positively activated in step $1$ is strictly less than $1$, and the
 	gap is at least $1/d_m$.
 According to Lemma~\ref{lem:NPhard}, all $k$ positive seeds must be chosen among the bottom nodes.
 Thus, in step $2$ any node that was not positive in step $1$ must
 	become negative, due to negative influence dominance.
 Hence, we have $\sigma_{\NIR}(S)\leq(|V|-1)(\ell+1)+(1-\frac{1}{d_m})(\ell+1)<|V|(\ell+1)$.
 Therefore, by checking if $G'$ has a positive seed set of size $k$ that
 	achieves negative influence reduction of at least $|V|(\ell+1)$,
 	we can know if the original graph $G$ has a vertex cover of size $k$.
 \hfill $\Box$
 \end{proof}

\end{document}